\def\eqref#1{equation~\ref{#1}}
\def\1{\bm{1}}
\def\vzero{{\bm{0}}}
\def\vmu{{\bm{\mu}}}
\def\vc{{\bm{c}}}
\def\vp{{\bm{p}}}
\def\vt{{\bm{t}}}
\def\vw{{\bm{w}}}
\def\mA{{\bm{A}}}
\def\mB{{\bm{B}}}
\def\mC{{\bm{C}}}
\def\mR{{\bm{R}}}
\def\mV{{\bm{V}}}
\def\mW{{\bm{W}}}
\DeclareMathAlphabet{\mathsfit}{\encodingdefault}{\sfdefault}{m}{sl}
\SetMathAlphabet{\mathsfit}{bold}{\encodingdefault}{\sfdefault}{bx}{n}
\def\gB{{\mathcal{B}}}
\def\gC{{\mathcal{C}}}
\def\gG{{\mathcal{G}}}
\def\gM{{\mathcal{M}}}
\def\gN{{\mathcal{N}}}
\def\gS{{\mathcal{S}}}
\def\gV{{\mathcal{V}}}
\def\gW{{\mathcal{W}}}
\DeclareMathOperator*{\argmax}{arg\,max}
\DeclareMathOperator*{\argmin}{arg\,min}
\newtheorem{theorem}{Theorem}[section]
\newtheorem{claim}{Claim}
\newtheorem{definition}{Definition}
\newtheorem{assumption}{Assumption}
\newtheorem{example}{Example}
\newtheorem{remark}{Remark}
\title{Do Data Valuations Make Good Data Prices?}
\author{%
  Dongyang Fan\\
  EPFL \\
  \texttt{dongyang.fan@epfl.ch} 
  \And
  Tyler J. Rotello \\
  USC \\
  \texttt{rotello@usc.edu} 
  \And
  Sai Praneeth Karimireddy\\
  USC \\
  \texttt{karimire@usc.edu}
}
\begin{document}

\maketitle

\begin{abstract}
  As large language models increasingly rely on external data sources, compensating data contributors has become a central concern. But how should these payments be devised? We revisit data valuations from a \emph{market-design perspective} where payments serve to compensate data owners for the \emph{private} heterogeneous costs they incur for collecting and sharing data.
  We show that popular valuation methods—such as Leave-One-Out and Data Shapley—make for poor payments. They fail to ensure truthful reporting of the costs, leading to \emph{inefficient market} outcomes. To address this, we adapt well-established payment rules from mechanism design, namely Myerson and Vickrey-Clarke-Groves (VCG), to the data market setting. We show that Myerson payment is the minimal truthful mechanism, optimal from the buyer’s perspective. Additionally, we identify a condition under which both data buyers and sellers are utility-satisfied, and the market achieves efficiency.
  Our findings highlight the importance of incorporating incentive compatibility into data valuation design, paving the way for more robust and efficient data markets. Our data market framework is readily applicable to real-world scenarios. 
  We illustrate this with simulations of contributor compensation in an LLM based retrieval-augmented generation (RAG) marketplace tasked with challenging medical question answering.
\end{abstract}

\section{Introduction}

The emergence of large language models (LLMs) has placed data at the heart of technological and societal advancement. As concerns mount that the availability of data may not keep pace with the rapid growth of model sizes \citep{villalobos2024rundatalimitsllm}, the ability to source high-quality data has become a critical factor for the success of LLM companies. Currently, web-scale data crawling is conducted with little regard for data provenance, often leading to copyright infringement that impacts content owners. This has resulted in a growing number of copyright lawsuits, such as those documented by \citet{openainyt2023, concordanthropic2023}. 

\begin{wrapfigure}{rt}{0.35\textwidth}
    \vspace{-1em}
\includegraphics[width=\linewidth]{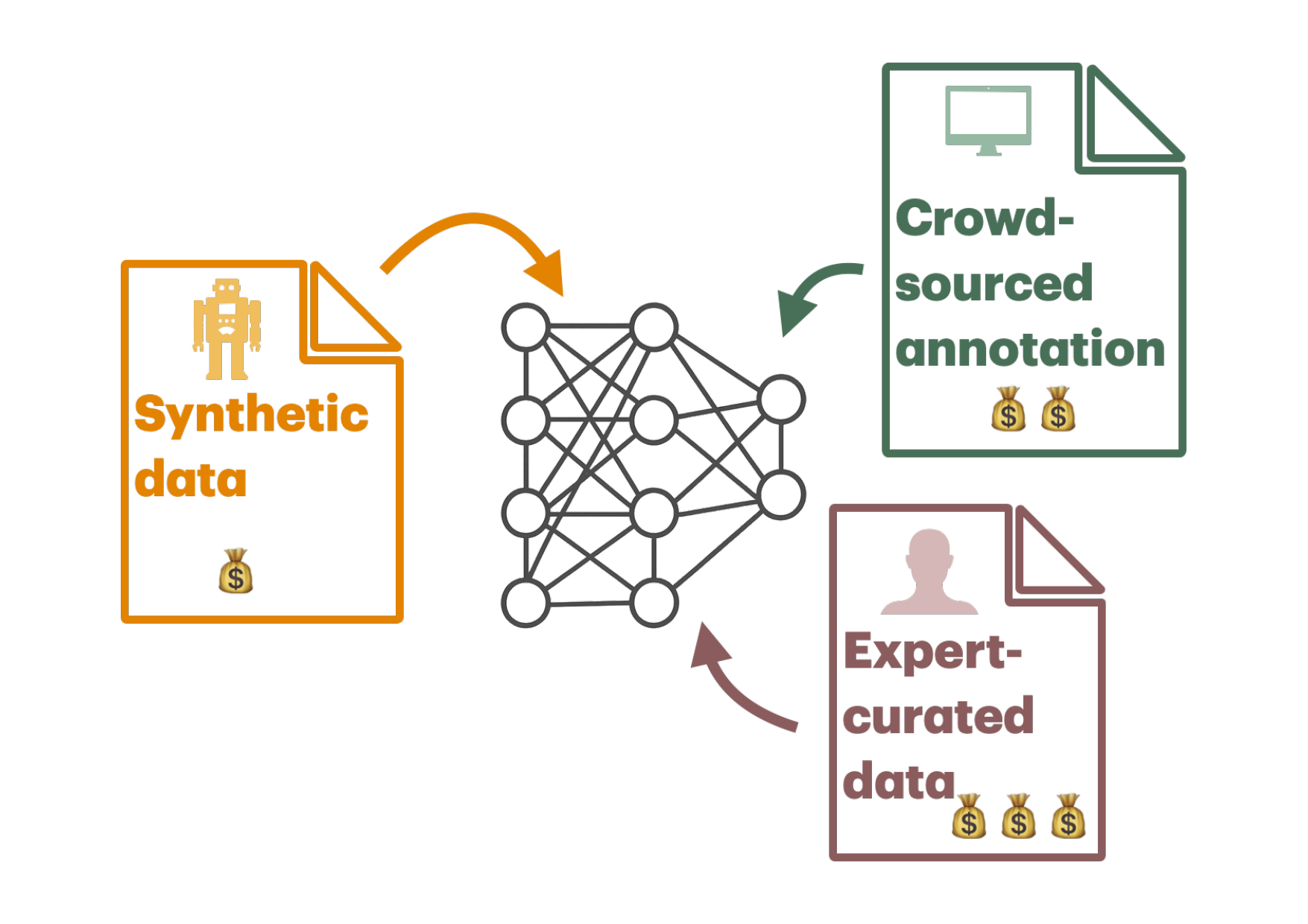}
    \caption{Data sources have different \emph{private} unit costs}
    \label{fig: data market diagram}
    \vspace{-1em}
\end{wrapfigure}
Increasingly, content creators are choosing to opt out of contributing their work for AI training \citep{longpre2024consentcrisisrapiddecline,fan2025performantllmsethicalquantifying}. To encourage participation from data owners, there is a rising need to design an efficient data-trading market. Data owners should be fairly compensated for the use of their data, taking into account factors like the intrinsic cost of data generation, for example, it is more costly to gather expert-curated data than synthetic data (Figure~\ref{fig: data market diagram}). However, these costs are private - only the data collectors themselves know the effort and cost they put in. A well designed data marketplace could nevertheless discover these heterogeneous costs, match high-cost data collectors only with buyers who derive a high-value from the dataset, and extract appropriately high compensation. It is unclear if existing approaches to data pricing achieve such goals.

Existing data valuation methods for machine learning primarily focus on interpretability and guiding data collection, often neglecting market dynamics. In a market scenario, data owners (sellers) can misreport data-related costs to maximize their compensation, while data buyers strive to improve model performance at minimal cost. When valuation methods such as Leave-One-Out~\citep{weisberg1982residuals} and Data Shapley~\citep{ghorbani2019data} are directly used as pricing rules in data markets, we point out that data sellers are incentivized to misreport their true costs, resulting in inefficient data market collaboration.

To address the challenge of untruthful reporting, we draw on well-established payment mechanisms from game theory, specifically, the Myerson payment rule~\citep{myerson1981optimal} and the Vickrey–Clarke–Groves (VCG) mechanism~\citep{vickrey1961counterspeculation,clarke1971multipart,groves1973incentives}, and adapt them to our data trading framework. Our analysis shows that: (1) the Myerson payment rule yields the minimum possible payment, making it optimal from the buyer’s standpoint; and (2) when allocations are made to maximize overall market welfare in an unconstrained setting, the VCG and Myerson payments coincide. Additionally, we demonstrate that when buyers’ utility functions are subadditive, total payments can be distributed across buyers while maintaining individual rationality for all participants. These findings highlight the crucial role of game-theoretic principles in designing data market payment schemes.

Our data market modeling framework can be readily extended to real-world applications, ranging from simple mean estimation to optimizing data mixtures for LLM pretraining, to compensating content contributors in retrieval-augmented generation (RAG) systems. For the RAG market, the payments can be easily calculated by modifying the existing post-retrieval reranking process, showing great potential for seamless integration into current pipelines while enabling fair and truthful compensation for document owners.

\section{Related Work}
\vspace{-2mm}
 \paragraph{Data Valuation Methods.} Data valuation has gained growing popularity in machine learning applications, mainly for the purposes of explainability and addressing algorithmic fairness~\citep{garima2020tracing,liang2022advances}, guiding high-quality data selection~\citep{chhabra2024data, yu2024mates}. Among them there are two primary categories methods -- Shapley value based and Leave-One-Out (LOO) based.  Shapley value based methods include Data Shapley~\citep{ghorbani2019data} and computationally feasible variations~\citep{jia2020efficienttaskspecificdatavaluation,wang2024data}. LOO-based methods often encompass model retraining with one sample left out~\citep{koh2020understandingblackboxpredictionsinfluence}. In reality, rational participants in a market setting are likely to act strategically to maximize their profits. As a result, applying these pricing methods can lead to misreporting and inefficiencies within data markets.

\vspace{-2mm}
 \paragraph{Auction Theory and Mechanism Design.} Starting from the Arrow–Debreu model~\citep{arrow1954existence}, a central question of market design is to set prices such that the net welfare of all participants is maximized. Such a market is said to be \emph{efficient}. Auction theory~\citep{vijay2993auctiontheory}, and in general mechanism design, investigates how to how to design prices which preserve efficiency even if some information is unknown i.e. \emph{private}. In the context of data markets, this implies that we want to design prices such that all participants \emph{truthfully} report their true costs of data sharing and benefits received from said data, so that we can maximize social welfare.

\vspace{-2mm}
 \paragraph{Data Markets.} Due to unknown data sharing costs incurred by the data sellers, designing an efficient mechanism can be challenging. 
\citet{D_tting_2021} show that with one data sample from the seller's distribution, truthful mechanisms can be achieved with approximate market efficiency, assuming unit supply sellers. \citet{rasouli2021datasharingmarkets} design a truthful mechanism where data quality can be exchanged with monetary payments. However, they do not analyze its social efficiency.  
\citet{Agarwal_2024} study truthful mechanism design when buyers' externalities due to competition are known, aiming at maximizing social welfare or revenue. We refer to a recent survey~\citep{zhang2024survey} for a more detailed overview of the area. We note that most of these works often assume known simple structured valuation functions and combinatorial allocations, not suitable for machine learning worlds where the information sharing can happen in continuous space and the buyers' valuations are directly connected to model performances, which is our focus.

\section{What is a Desired Payment?}
\subsection{Our Proposed Modeling Framework}

We consider a data market with disjoint sets of data buyers $\gB$ and data sellers $\gS$. $\gB \cap \gS = \emptyset$. Buyers and sellers interact through the allocation (information exchange) matrix $\mW \in \mathbb{R}^{|\gB|\times |\gS|}$, where $|w_{i,j}|$ denotes the information exchange between buyer $i$ and seller $j$, for example, how much data buyer $i$ purchases from seller $j$. Depending on the use case, $\mW$ can live in constrained or unconstrained domains. For easy theoretical derivation, we further assume that $\mW$ is a continuous variable. We use $\mW_{i,:}$ and $\mW_{:,j}$ to denote $i$-th row and $j$-th column, respectively. Our proposed modeling framework is illustrated in Figure~\ref{fig:data-market-diagram}.

\begin{figure}
    \centering
    \includegraphics[width=0.8\linewidth]{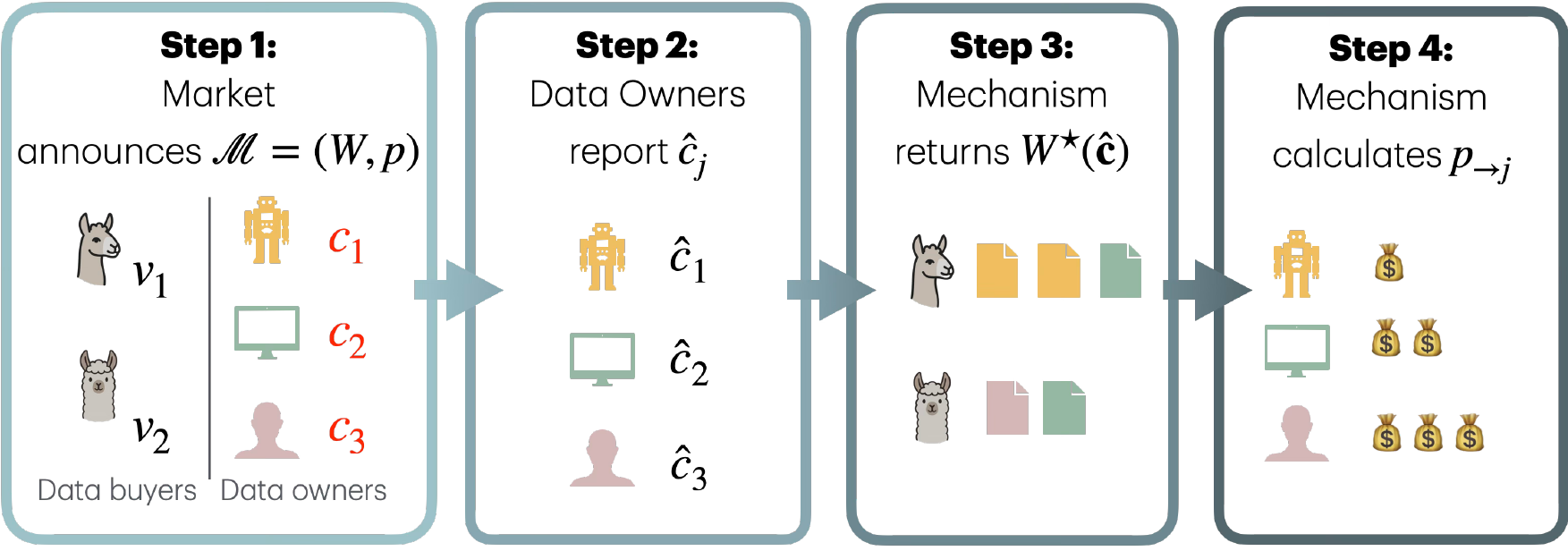}
    \caption{Diagram of our data market modeling framework. \textcolor{red}{Red} denotes private values.}
    \label{fig:data-market-diagram}
    \vspace{-1em}
\end{figure}

\paragraph{Buyer's Performance:} Each buyer has its performance (e.g. accuracy) function $v_i(\mW_{i,:}) : \mathbb{R}^{|\gS|} \rightarrow \mathbb{R}_{\geq 0}$. The performance measures how buyer $i$'s model gets improved utilizing the data from the sellers; or equivalently, the drop in loss after data acquisition,  $v_i = l_i(\vzero) - l_i(\mW_{i,:})$, where $l_i(\vzero)$ denotes the standalone loss for buyer $i$. In game theory literature, such a performance function can as well be called buyer's valuation. In our main analysis, we assume buyer's valuation functions are known to the market, which makes the problem more tractable. It is infeasible to design a truthful and efficient market with unknown buyers' valuation, as analyzed in Appendix~\ref{app: private-buyer-valuation}.

\paragraph{Seller's Cost:} $c_j \in \mathbb{R}_{\geq 0}$ is the \emph{private unit cost} of data and $f (\cdot) : \mathbb{R}^{|\gB|} \rightarrow \mathbb{R}_{\geq 0}$ quantifies the data sharing magnitude. For a specific seller $j$, we denote the data sharing magnitude as $f_j(\mW) = f(\mW_{:,j})$. $f_j(\mW)$ is non-decreasing in the norm of $j$th column of $\mW$. A typical choice of $f_j$ is $\sum_{i \in \gB} w_{ij}^2$. The \emph{total} data sharing cost for seller $j$ is thus $c_j f_j (\mW)$.   The notion of data sharing cost can encompass data generation costs, costs due to privacy leakage, and etc. Using game theory terminology, the seller's valuation is $-c_j f_j$.

\begin{assumption}
All $v_i$ and $f_j$ are differentiable with respect to $\mW$. 
\end{assumption}

\begin{definition}[Social Welfare and Social Cost]
    For an allocation $\mW$, social welfare (SW) is the sum of all players' valuations (buyers' performance minus sellers' costs): 
   $SW = \sum_{i \in \gS} v_i(\mW_{i,:}) - \sum_{j \in \gB} c_j f(\mW_{:,j})
    $.
    Correspondingly, we can define social cost (SC), minimizing which is equivalent to maximizing the social welfare.
$SC = \sum_{i \in \gS} l_i(\mW_{i,:}) + \sum_{j \in \gB} c_j f(\mW_{:,j})$

\end{definition}

\vspace{-2mm}
\paragraph{Mechanism.}
A mechanism is a pair $\gM=(\mW,\vp)$, where $\mW$ is the allocation matrix and $\vp=(p_i)_{i\in\gB\cup\gS}$ is the vector of payments to/from the players. We consider several payment specifications in the sections that follow. Given reported costs $\hat{\vc}$, the mechanism selects an allocation $\mW^\star(\hat{\vc})$ that maximizes reported social welfare:
\vspace{-1mm}
\begin{equation}
\label{eq: W_star}
\mW^\star(\hat{\vc}) \in \argmax [SW \coloneqq \sum_{i\in\gB} v_i(\mW)-\sum_{j\in\gS} \hat{c}_j f_j(\mW) ].
\end{equation}
An optimum $\mW^\star(\hat{\vc})$ always exists; see Appendix~\ref{app:existence-of-W}. The mapping $\mW^\star(\cdot)$ is fully determined by the reported cost $\hat{\vc}$.

\subsection{Desired Properties for Payments}

Desired payments should be truthful, individual-rational, budget-balanced and socially efficient. We formally define these properties as follows.

\begin{definition}[Social Efficiency]
    Allocation denoted by $\mW \in \gW$ is socially efficient (SE) if it minimizes the social cost or maximizes the social welfare. 
\end{definition}

\begin{definition}[Utility]
 Let $P_{i \rightarrow }$ be the price to pay for a buyer $i \in \gB$ and $P_{ \rightarrow j}$ be the payment made to a seller $j \in \gS$. Utility is defined as the participant's payoff from the market:
 \[
u_i \coloneqq v_i(\mW) - P_{i \rightarrow } , \qquad  u_j \coloneqq P_{ \rightarrow j} - c_j f_j(\mW) 
\]
\end{definition}

\begin{definition}[Individual Rationality]
    Individual Rationality (IR) is satisfied if each participant's utility is non-negative \vspace{-0.3em}
\[
u_i \geq 0 \; \forall i \in \gB, \qquad  u_j \geq 0 \; \forall j \in \gS\,.
\]
\end{definition}

\begin{definition}[Incentive Compatibility or Truthful]
A mechanism $\gM$ is considered incentive-compatible (IC) if each participant achieves their best outcome by truthfully reporting their private values, regardless of what others report. We also refer to an IC mechanism as a \emph{truthful} mechanism.
\end{definition}

\begin{definition}[Budget Balance]
    A mechanism is called strong budget-balanced (SBB) if $
    \sum_{i \in \gB} P_{i \rightarrow} = \sum_{j \in \gS} P_{\rightarrow j} 
    $ and 
    weak budget-balanced (WBB) if $\sum_{i \in \gB} P_{i \rightarrow} > \sum_{j \in \gS} P_{\rightarrow j}$.
\end{definition}

\section{Common Payments are not Truthful}
In this section, we examine three common payment methods for seller compensation: direct reimbursement of reported costs and two widely used data valuation approaches, namely LOO and Data Shapley payments. We show that all of them incentivize sellers to misreport their unit data costs. 

From now on, we denote $\hat{\vc}_{-j} = \hat{\vc} \backslash \{\hat{c}_j\}$, that is, the collection of all reported unit costs of sellers other than seller $j$. We first prove a fundamental result, which shows that with a larger reported unit data cost, seller $j$ shares less data.

\begin{claim}
\label{claim - fj-monotonicity}
 $f_j(\mW^\star(\hat{c}_j, \hat{\vc}_{-j}))$ is monotonically non-increasing in $\hat{c}_j$.    
\end{claim}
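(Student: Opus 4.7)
The plan is to use the classical revealed-preference/exchange argument familiar from monotonicity lemmas in mechanism design (e.g.\ Myerson's lemma). Fix $\hat{\vc}_{-j}$ and pick two reports $\hat{c}_j < \hat{c}_j'$ for seller $j$. Let $\mW := \mW^\star(\hat{c}_j, \hat{\vc}_{-j})$ and $\mW' := \mW^\star(\hat{c}_j', \hat{\vc}_{-j})$. The goal is to show $f_j(\mW) \ge f_j(\mW')$.

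The key step is to write down the two optimality inequalities coming from Eq.~\eqref{eq: W_star}. Since $\mW$ maximizes social welfare under report $\hat{c}_j$, and $\mW'$ is a feasible alternative, we get
\begin{equation*}
\sum_{i\in\gB} v_i(\mW) - \hat{c}_j f_j(\mW) - \sum_{k\neq j}\hat{c}_k f_k(\mW) \;\ge\; \sum_{i\in\gB} v_i(\mW') - \hat{c}_j f_j(\mW') - \sum_{k\neq j}\hat{c}_k f_k(\mW').
\end{equation*}
Symmetrically, since $\mW'$ maximizes social welfare under $\hat{c}_j'$, with $\mW$ as a feasible alternative,
\begin{equation*}
\sum_{i\in\gB} v_i(\mW') - \hat{c}_j' f_j(\mW') - \sum_{k\neq j}\hat{c}_k f_k(\mW') \;\ge\; \sum_{i\in\gB} v_i(\mW) - \hat{c}_j' f_j(\mW) - \sum_{k\neq j}\hat{c}_k f_k(\mW).
\end{equation*}

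Adding the two inequalities, every $v_i$ term and every $\hat{c}_k f_k(\cdot)$ for $k\ne j$ cancels, leaving
\begin{equation*}
-\hat{c}_j f_j(\mW) - \hat{c}_j' f_j(\mW') \;\ge\; -\hat{c}_j f_j(\mW') - \hat{c}_j' f_j(\mW),
\end{equation*}
which rearranges to $(\hat{c}_j' - \hat{c}_j)\bigl(f_j(\mW) - f_j(\mW')\bigr) \ge 0$. Since $\hat{c}_j' > \hat{c}_j$, this gives $f_j(\mW) \ge f_j(\mW')$, establishing the desired monotonicity.

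The argument is essentially mechanical once the two optimality inequalities are written, so I do not anticipate any serious obstacle. The only minor subtlety is that $\mW^\star$ is defined via $\argmax$, which may be set-valued; however, the inequalities above hold for any selection from the $\argmax$, so the claim should be read (and proved) as holding for every measurable selector of $\mW^\star(\cdot)$. Differentiability of $v_i, f_j$ plays no role here; only the optimality characterization is used.
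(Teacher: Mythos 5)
Your proof is correct and is essentially the same argument as the paper's: both write the two revealed-preference optimality inequalities for $\mW^\star(\hat{c}_j,\hat{\vc}_{-j})$ and $\mW^\star(\hat{c}_j',\hat{\vc}_{-j})$, add them so that the $v_i$ and $\hat{c}_k f_k$ ($k\ne j$) terms cancel, and conclude $(\hat{c}_j'-\hat{c}_j)\bigl(f_j(\mW)-f_j(\mW')\bigr)\ge 0$. The paper merely wraps the identical algebra in a proof-by-contradiction, and your remark about the set-valued $\argmax$ is a minor point the paper glosses over.
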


Proof can be seen in Appendix~\ref{app: proof for fj monotonicity}. It is worth noting that the total data sharing magnitude of other sellers than $j$ (denoted by $\sum_{k \in \gS \backslash \{j\}} f_k(\mW^\star(\hat{c}_j, \hat{\vc}_{-j}))$) can increase or decrease in $\hat{c}_j$, as $\mW^\star(\hat{c}_j, \hat{\vc}_{-j}))$ can have both positive and negative entries. We illustrate this empirically in Figure~\ref{fig:allocation_vs_cj}.

\subsection{Direct Reimbursement Incentivizes Over-reporting}

\begin{wrapfigure}{rt}{0.3\textwidth}
    \vspace{-2em}
\includegraphics[width=\linewidth]{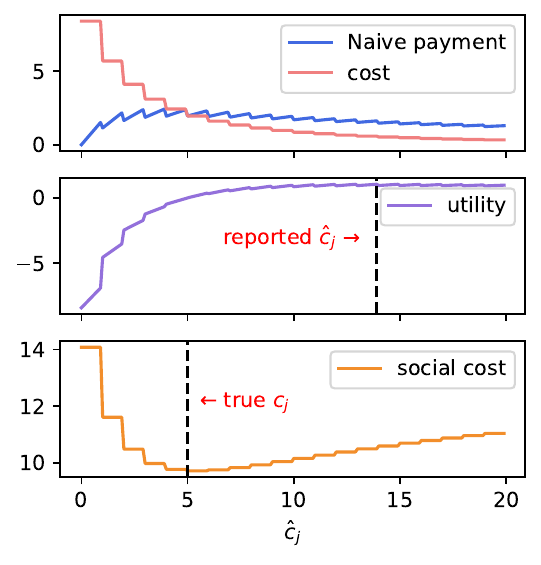}
    \caption{User's utility is maximized when reported $\hat{c}_j$ is greater than true $c_j =5$ (Mean estimation market).}
    \label{fig: naive-payment-over-report}
    \vspace{-2em}
\end{wrapfigure}

The most straightforward payment is to directly pay data sellers the costs they report, denoted as $P^{\text{dir}}_{\rightarrow j} = \hat{c}_j f_j(\mW^\star(\hat{c}_j, \hat{\vc}_{-j}))$. We show that such a payment incentivizes overreporting $\hat{c}_j$, i.e. $\hat{c}_j > c_j$.

\begin{claim}
    Using reported cost as naive payment incentivizes data owners to over-report $\hat{c}_j$.
\end{claim}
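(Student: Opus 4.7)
The plan is to substitute the direct-reimbursement rule into the seller's utility, observe that truthful reporting yields zero utility, and exhibit a (weakly) profitable over-report that is strict under mild non-degeneracy.

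First, I would plug $P^{\text{dir}}_{\rightarrow j} = \hat c_j\, f_j(\mW^\star(\hat c_j, \hat{\vc}_{-j}))$ into the utility definition to obtain the closed-form expression
\[
u_j(\hat c_j) \;=\; (\hat c_j - c_j)\, f_j\!\left(\mW^\star(\hat c_j, \hat{\vc}_{-j})\right).
\]
Since $f_j \geq 0$ by definition, this is non-negative for any $\hat c_j \geq c_j$ and non-positive for any $\hat c_j \leq c_j$; in particular, truthful reporting $\hat c_j = c_j$ yields exactly $u_j = 0$. Under-reporting therefore weakly hurts the seller, and only over-reporting can possibly generate strictly positive utility.

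Next, I would establish that some $\hat c_j > c_j$ is strictly profitable. Assume the non-degenerate case $f_j(\mW^\star(c_j,\hat{\vc}_{-j})) > 0$, i.e.\ at the truthful report the seller actually trades. By Claim~\ref{claim - fj-monotonicity}, the map $\hat c_j \mapsto f_j(\mW^\star(\hat c_j,\hat{\vc}_{-j}))$ is non-increasing; combined with continuity of this map at $c_j$ (which follows from the differentiability assumption and Berge's maximum theorem applied to \eqref{eq: W_star}), there exists $\delta>0$ such that $f_j(\mW^\star(\hat c_j,\hat{\vc}_{-j}))>0$ for all $\hat c_j \in (c_j, c_j+\delta)$. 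For any such $\hat c_j$,
\[
u_j(\hat c_j) \;=\; (\hat c_j - c_j)\, f_j\!\left(\mW^\star(\hat c_j, \hat{\vc}_{-j})\right) \;>\; 0 \;=\; u_j(c_j),
\]
so a small over-report strictly improves the seller's payoff. In the fully degenerate case where $f_j$ is already $0$ at truthful reporting, over-reporting still weakly dominates (utility stays at $0$), so truthful reporting is never a strictly dominant strategy.

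The main obstacle I anticipate is the continuity of the allocation map $\hat c_j \mapsto \mW^\star(\hat c_j,\hat{\vc}_{-j})$, since the argmax in \eqref{eq: W_star} need not be single-valued. I would handle this by invoking upper hemicontinuity of the argmax correspondence (guaranteed by the maximum theorem) together with the fact that I only need the existence of \emph{some} nearby over-report $\hat c_j$ with $f_j > 0$, not a two-sided neighborhood; the monotonicity from Claim~\ref{claim - fj-monotonicity} ensures $f_j$ cannot jump upward, but because it starts strictly positive and is bounded below by $0$, there is a right-neighborhood of $c_j$ on which $f_j$ stays positive. Everything else is a direct sign check on the product $(\hat c_j - c_j)\,f_j$.
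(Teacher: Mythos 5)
Your proposal is correct and follows essentially the same route as the paper: substitute the direct-reimbursement rule into the seller's utility to get $u_j = (\hat c_j - c_j) f_j(\mW^\star(\hat c_j, \hat{\vc}_{-j}))$, note truthful reporting gives zero, and conclude over-reporting is profitable. Your additional care about the degenerate case $f_j = 0$ and the continuity/monotonicity argument ensuring a strictly profitable over-report exists is a refinement the paper's one-line conclusion glosses over, but it does not change the argument's substance.
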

\vspace{-1em}
\begin{proof}
The utility for seller $j$ is the payment it received minus its data sharing cost.
    \begin{equation}
    u_j = P_{\rightarrow j}^{\text{dir}} - c_j f_j(\mW^\star(\hat{c}_j, \hat{\vc}_{-j})) = (\hat{c}_j -c_j) f_j(\mW^\star(\hat{c}_j, \hat{\vc}_{-j}))
\end{equation}
 When $\hat{c}_j > c_j$, we have $u_j(\hat{c}_j) > u_j(c_j) =0$. Strategic seller $j$ would over-report.
\end{proof}
\vspace{-1em}

Notably, the seller's incentive to over-report is bounded as $f_j(\mW^\star(\hat{c}_j, \hat{\vc}_{-j}))$ approaches zero when $\hat{c}_j$ becomes sufficiently large, resulting in near-zero utility. This is observed in Figure~\ref{fig: naive-payment-over-report}.

\subsection{Popular Data Valuation Payments are not Truthful Either}

We now review two conventional data valuation methods: Leave-One-Out (LOO) valuation and Data Shapley. When seller compensation is based on these methods, we demonstrate that sellers are consistently incentivized to misreport their private types, as compensation depends on their self-reported information.
    
\subsubsection{Leave-One-Out Payment Rule}
LOO has been a standard approach to estimate influence in statistics~\citep{weisberg1982residuals, koh2020understandingblackboxpredictionsinfluence}. LOO valuation quantifies a data point's contribution by assessing the change in model performance when the point is removed. This method inherently requires retraining the model on the reduced dataset. In our context, it corresponds to the following equation, representing the difference in buyers' performance function with and without the presence of data seller \( j \).

\vspace{-1em}
\begin{equation}
    P^{LOO}_{\rightarrow j} = \sum_{i \in \gB} v_i(\mW^\star(\hat{c}_j, \hat{\vc}_{-j})) - v_i(\mW^\star(\infty, \hat{\vc}_{-j}))
\end{equation}
\begin{claim}
\label{claim: LOO-leads-to-misreport}
Using LOO valuation as the payment rule incentivizes sellers to misreport their true costs, either by over-reporting or under-reporting \text{(proof in Appendix~\ref{app:proof-LOO-misreport}). 
}
\end{claim}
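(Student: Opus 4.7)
The plan is to compute seller $j$'s utility derivative with respect to $\hat{c}_j$ at the truthful report $\hat{c}_j = c_j$ under the LOO payment, and show that it is generically nonzero. The sign of this derivative is governed entirely by how \emph{other} sellers' equilibrium contributions respond to $\hat{c}_j$, and I will argue it can be made either positive or negative depending on market structure, witnessing over- and under-reporting respectively.

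Set $V(\hat{c}_j) := \sum_{i \in \gB} v_i(\mW^\star(\hat{c}_j, \hat{\vc}_{-j}))$ and $F_k(\hat{c}_j) := f_k(\mW^\star(\hat{c}_j, \hat{\vc}_{-j}))$. Since $V(\infty)$ does not depend on $\hat{c}_j$, the seller's utility simplifies to
\[
u_j(\hat{c}_j) \;=\; V(\hat{c}_j) - V(\infty) - c_j F_j(\hat{c}_j),
\]
and truthfulness would require $u_j$ to be maximized at $\hat{c}_j = c_j$. Next, I would invoke the envelope theorem on the allocation rule $\mW^\star(\hat{\vc}) \in \argmax_\mW \sum_i v_i(\mW) - \sum_k \hat{c}_k f_k(\mW)$ from \eqref{eq: W_star}. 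Differentiating the value of this program in $\hat{c}_j$ and equating it to the partial derivative $-F_j(\hat{c}_j)$ yields the identity
\[
V'(\hat{c}_j) \;=\; \sum_{k \in \gS} \hat{c}_k F_k'(\hat{c}_j).
\]

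Substituting this into $u_j'(\hat{c}_j) = V'(\hat{c}_j) - c_j F_j'(\hat{c}_j)$ and evaluating at $\hat{c}_j = c_j$, the self-term $c_j F_j'(c_j)$ cancels, leaving
\[
\left.\frac{du_j}{d\hat{c}_j}\right|_{\hat{c}_j = c_j} \;=\; \sum_{k \in \gS \setminus \{j\}} \hat{c}_k F_k'(c_j).
\]
Claim~\ref{claim - fj-monotonicity} only constrains the self-effect $F_j'$ to be nonpositive, so this cross-sum is sign-unconstrained. Whenever it is strictly positive the seller has a local incentive to over-report; whenever it is strictly negative the seller prefers under-reporting. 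Either way, $\hat{c}_j = c_j$ is not a stationary point of $u_j$, so truthful reporting is not a best response.

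The main obstacle is verifying that both signs of the cross-term are genuinely realizable, rather than merely compatible with Claim~\ref{claim - fj-monotonicity}. I would handle this by constructing two small explicit markets: a two-seller mean-estimation or linear-regression setting with \emph{substitutable} data, where raising $\hat{c}_j$ shifts the buyer's weights onto seller $k$ so $F_k'(c_j) > 0$ (over-reporting strictly profitable); and a \emph{complementary} setting where the buyer down-weights all contributions together so $F_k'(c_j) < 0$ (under-reporting strictly profitable). Combined with the envelope identity above, these examples complete the proof that LOO payments fail incentive compatibility from either direction.
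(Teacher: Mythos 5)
Your proposal follows essentially the same route as the paper's proof: both apply the envelope theorem to the welfare-maximizing allocation to get $V'(\hat{c}_j)=\sum_k \hat{c}_k F_k'(\hat{c}_j)$, substitute into the seller's utility, and find that at $\hat{c}_j=c_j$ the derivative reduces to the cross-term $\sum_{k\neq j}\hat{c}_k F_k'(c_j)$, which is generically nonzero. Your explicit plan to exhibit substitutable and complementary markets realizing both signs is a slight strengthening over the paper, which simply asserts the cross-term is nonzero and defers to the mean-estimation simulations for evidence of both directions.
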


\subsubsection{Payment via Shapley Value}
Shapley value was first introduced to attribute fair contributions in a cooperative game by~\cite{RM-670-PR}. In the Machine Learning world, Data Shapley~\citep{ghorbani2019data} was proposed to address data valuation. We adopt data shapley calculation in our scenario, as in (\ref{eq: shapley-payment}). The idea is to calculate the marginal contribution of seller $j$, averaged over \emph{all} subsets of sellers. Depending on the market size, the calculation of Shapley value can be computationally inefficient. In contrast, LOO payment only considers the marginal contribution to the subset $\gS \backslash \{j\}$.

\vspace{-1em}
\begin{equation}
\label{eq: shapley-payment}
    P_{\rightarrow j}^{SP} \;=\; 
    \sum_{\pi \,\subseteq\, \gS \setminus \{j\}} 
    \frac{|\pi|!\,\bigl(|\gS| - |\pi| - 1\bigr)!}{|\gS|!}
    \Bigl[
      \sum_{i \in \gB} v_i \bigl(\gB \cup \pi \cup \{j\}\bigr) \;-\; \sum_{i \in \gB} v_i (\gB \cup \pi )
    \Bigr].
\end{equation}

where $v_i \bigl(\gB \cup \pi \bigr)$ denotes $v_i(\mW')$, where $\mW'$ maximizes the reported social welfare of buyers $\gB$ and a subset of sellers $\pi$: 
$\mW'  = \argmax_{\mW \in \mathbb{R}^{|\gB|\times |\pi|}} \sum_{i \in \gB} v_i(\mW) - \sum_{j \in \pi} \hat{c}_j f(\mW)$.

\begin{remark}
\label{remark: shapley payemnt}
$P^{SP}_{\rightarrow j}$ is not IC either. As like LOO payment, the utility of seller $j$ is dependent on its reported $\hat{c}_j$, and thus a rational seller $j$ can manipulate $\hat{c}_j$ to arrive at a higher utility. 
\end{remark}

\begin{remark}
    If the performance function is super-additive\footnote{$v_i$ specified by $\mW^\star(\hat{\vc})$ are super-additive in the set of data sellers} (e.g. when complementary but necessary data sources are combined to solve a task), LOO payment is greater than Shapley payment. On the flip side, if the performance function is sub-additive\footnote{$v_i$ specified by $\mW^\star(\hat{\vc})$ are sub-additive in the set of data sellers} (e.g. when the contribution of one user is very similar to that of another), LOO payment is smaller than Shapley payment.
\end{remark}

\begin{figure}[t!]
 \begin{subfigure}{0.24\textwidth}
        \centering
\includegraphics[width=\linewidth]{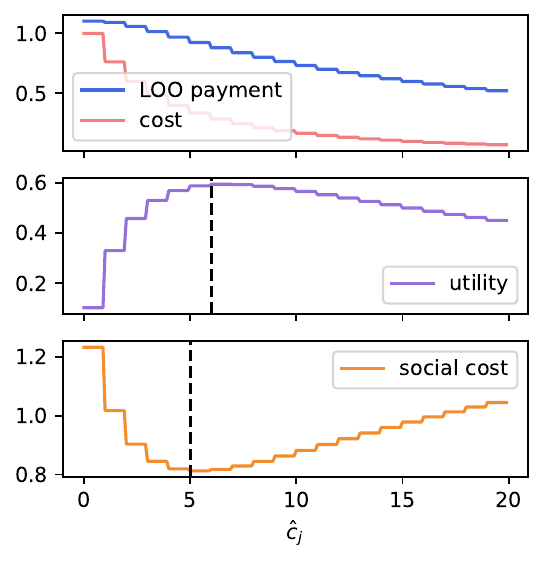}
        \label{subfig:LOO-mean}
    \end{subfigure}
    \hfill
    \begin{subfigure}{0.24\textwidth}
        \centering
        \includegraphics[width=\linewidth]{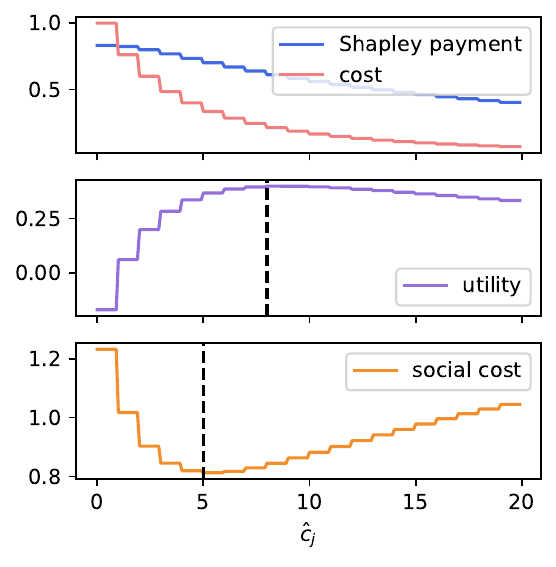}
        \label{subfig:shapley-mean}
    \end{subfigure}
     \begin{subfigure}{0.24\textwidth}
        \centering
\includegraphics[width=\linewidth]{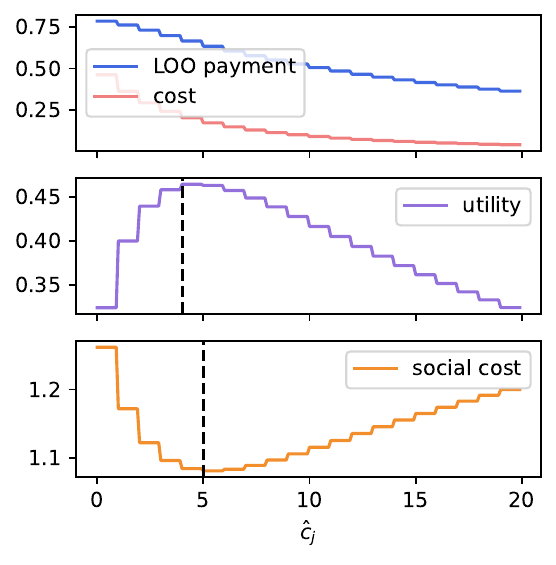}
        \label{subfig:LOO-mean-2}
    \end{subfigure}
    \hfill
    \begin{subfigure}{0.24\textwidth}
        \centering
        \includegraphics[width=\linewidth]{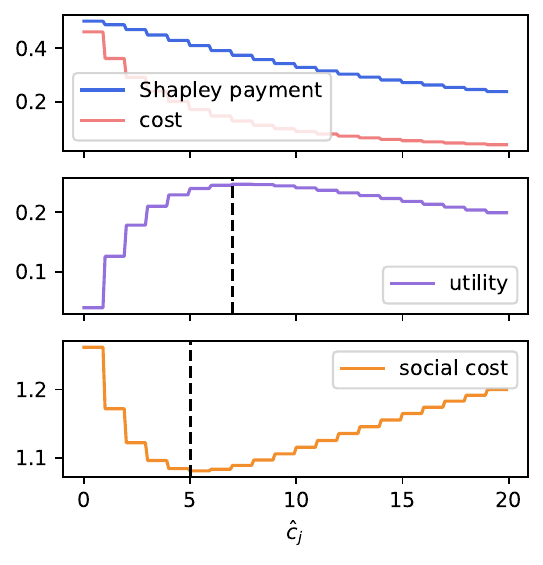}
        \label{subfig:shapley-mean-2}
    \end{subfigure}
    \vspace{-4mm}
        \caption{ First row: how $P^{LOO}_{\rightarrow j}$ and $P^{SP}_{\rightarrow j}$ change with reported unit cost $\hat{c}_j$; second row: how utility ($u_j = P_{\rightarrow j} - c_j f_j(\mW^\star(\hat{c}_j, \hat{\vc}_{-j}))$) change with $\hat{c}_j$; third row: how social cost ($SC= \sum_{i \in \gB} v_i(\mW^\star(\hat{c}_j, \hat{\vc}_{-j})) -\sum_{k \in \gS} c_k f_k(\mW^\star(\hat{c}_j, \hat{\vc}_{-j}))$) change with $\hat{c}_j$. The dashed lines denote optima. The first and second pictures are from one random seed, and the third and fourth pictures are from another random seed in our Mean Estimation Market experiment.} 
        \vspace{-1em}
        \label{fig: LOO_shapley_misreport}
\end{figure}

\subsection{An Illustrative Example -- Mean Estimation Markets}
\label{ sec: exp-mean-est-market}

To support our claim that LOO and Shapley payments can incentivize misreporting, we present a simple illustrative example based on mean estimation tasks. In this setup, buyers aim to estimate their local means $\{\vmu_i\}_{i \in \gB}$. Sellers have \emph{unbiased} local estimates from their local samples sampled from $\gN(\vmu_j, \sigma_j^2)$, $j \in \gS$. In order to achieve lower MSE loss, buyers may seek trading with sellers for their local estimates and combine them with weights $\mW$.  We choose $f_j$ to be $\sum_{i \in \gB} w_{ij}^2$. $\mW$ can be calculated by minimizing social cost, which is the sum of MSE and data sharing costs.
\vspace{-2mm}
\begin{equation}
\label{eq: mean-estimation-objective}
    \mW^\star(\hat{\vc}) \in \argmin \left [ SC \coloneqq  \mathbb{E} \sum_{i \in \gB}\|\sum_{j \in \gS}w_{i,j}\hat{\vmu}_j - \vmu_i\|^2 + \sum_{j \in \gS} \hat{c}_j \sum_{i \in \gB} w_{i,j}^2 \right ]
\end{equation}
\vspace{-2mm}

Due to the unique quadratic structure of the objective, we have a closed-form solution of $\mW$ in (\ref{eq: W-closed-form}).
\vspace{-2mm}
\begin{equation}
\label{eq: W-closed-form}
     \mW^\star(\hat{\vc}) = \mB(\mC+\mV+\mA)^{-1}
\end{equation}
where $\mB = [\langle \vmu_i, \vmu_j \rangle]_{i \in \gB, j\in \gS}$, $\mC = [\langle \vmu_i, \vmu_k \rangle]_{i,j \in \gS}$, $\mV = diag(\sigma_j^2)$ and $\mA = diag(\hat{c}_j)$. The proof is provided in Appendix~\ref{append: W-closed-form}. 

Thanks to the trackable formulation, we can thus calculate different payments precisely and efficiently. From our simulations detailed in Appendix~\ref{app: mean-est-market-experimental-setup}, we are able to accurately track how SC, utilities and payments change with $\hat{c}_j$ in Figure~\ref{fig: LOO_shapley_misreport}.  It is clear that when both LOO and Data Shapley are used as payment methods, the data owners are incentivized to misreport to maximize their own utilities, which increases the social cost for the whole marketplace.

\section{Game-Theoretic Truthful Payment Rules}
Since popular data valuation strategies perform poorly with respect to truthfulness, we turn to established game-theoretical results that guarantee truthful \emph{seller} reports. We adapt Myerson payment and VCG payment in our data market scenario and characterize their properties. We further demonstrate a scenario and a payment distribution rule where buyer IR is satisfied.

\subsection{Myerson Payment Rule}
\label{sec: myerson}

Myerson payment is a classical truthful payment rule~\citep{myerson1981optimal}, which pays sellers the reported cost plus an integral term. Our specific scenario maps to the following equation in (\ref{eq: myerson}).  Myerson payment rule is IC and IR for all sellers, for which we provide a proof in Appendix~\ref{app: proof-VCG-IC}.

\vspace{-1em}
\begin{equation}
\label{eq: myerson}
    P^{MS}_{\rightarrow j} = \hat{c}_j f_j(\mW^\star(\hat{c}_j, \hat{\vc}_{-j})) + \int_{\hat{c}_j}^{\infty} f_j(\mW^\star(u, \hat{\vc}_{-j})) d u
\end{equation}
\vspace{-1em}

\vspace{-0.3em}
\subsection{VCG Payment Rule}
\label{sec: vcg-payment}

VCG payment~\citep{vickrey1961counterspeculation,clarke1971multipart,groves1973incentives} is another classical payment rule, which calculates the externality of a specific seller to the social welfare, as in (\ref{eq: VCG payment rule}). The idea is to calculate the differences between the social welfare of $\gB \cup \gS \backslash \{j\}$ in seller $j$'s absence and the welfare when seller $j$ is present. By design, VCG payment rule is IC and IR for all sellers, for which we provide a proof in Appendix~\ref{app: proof-VCG-IC}. Moreover, VCG payment is upper-bounded. 

\vspace{-0.5em}
\begin{equation}
\label{eq: VCG payment rule}
\begin{aligned}
     P_{\rightarrow j}^{VCG} & = \sum_{i \in \gB} v_i(\mW^\star(\hat{c}_j, \hat{\vc}_{-j})) - \sum_{k \in \gS \backslash \{j\}} \hat{c}_k f_k(\mW^\star(\hat{c}_j, \hat{\vc}_{-j})) \\
    &  -\left[\sum_{i \in \gB} v_i(\mW^\star(\infty, \hat{\vc}_{-j})) - \sum_{k \in \gS \backslash \{j\}} \hat{c}_k f_k(\mW^\star(\infty, \hat{\vc}_{-j}))\right]
\end{aligned}
\end{equation}

\begin{claim}
\label{claim: upper bound of vcg}
    $P_{\rightarrow j}^{VCG} \leq \sum_{i \in \gB} v_i(\mW^\star (c_j, \vc_{-j})) - v_i(\mW^{\star -j} (c_j, \vc_{-j}))$, where $\mW^{\star -j}$ is $\mW^\star$ with the $j$th column set to zero, and all other entries unchanged. (Proof in Appendix~\ref{app: proof-vcg-upper-bound})
\end{claim}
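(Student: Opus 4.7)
The plan is to exploit the optimality of $\mW^\star(\infty, \hat{\vc}_{-j})$ as the welfare-maximizing allocation conditional on seller $j$ not contributing. When seller $j$ reports cost $\hat{c}_j=\infty$, any allocation with $f_j(\mW)>0$ incurs unbounded cost, so $\mW^\star(\infty,\hat{\vc}_{-j})$ must set $f_j=0$; equivalently, it maximizes $\sum_{i\in\gB} v_i(\mW)-\sum_{k\in\gS\setminus\{j\}}\hat{c}_k f_k(\mW)$ over all $\mW$ whose $j$th column contributes no magnitude. I would first argue this characterization rigorously (appealing to the assumption that $f_j$ is non-decreasing in the norm of column $j$, and $f_j=0$ at the zero column for the standard choice such as $\sum_i w_{ij}^2$).

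Next, observe that the matrix $\mW^{\star-j}(\hat{c}_j,\hat{\vc}_{-j})$ is itself a feasible candidate for this restricted optimization, since by definition its $j$th column is zero. Applying optimality therefore yields
\begin{equation*}
\sum_{i\in\gB} v_i(\mW^\star(\infty,\hat{\vc}_{-j})) - \sum_{k\in\gS\setminus\{j\}}\hat{c}_k f_k(\mW^\star(\infty,\hat{\vc}_{-j})) \;\geq\; \sum_{i\in\gB} v_i(\mW^{\star-j}) - \sum_{k\in\gS\setminus\{j\}}\hat{c}_k f_k(\mW^{\star-j}).
\end{equation*}
Substituting this lower bound into the bracketed subtracted term in the VCG definition (\ref{eq: VCG payment rule}) immediately produces an upper bound on $P^{VCG}_{\rightarrow j}$.

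The final step is a cancellation. Since $f_k(\mW)$ depends only on the $k$th column of $\mW$, zeroing out the $j$th column does not affect $f_k$ for any $k\neq j$, so $f_k(\mW^{\star-j})=f_k(\mW^\star(\hat{c}_j,\hat{\vc}_{-j}))$ for all $k\in\gS\setminus\{j\}$. The two cost sums $\sum_{k\neq j}\hat{c}_k f_k(\cdot)$ therefore cancel between the performance term and the subtracted term, leaving exactly $\sum_{i\in\gB}[v_i(\mW^\star(c_j,\vc_{-j})) - v_i(\mW^{\star-j}(c_j,\vc_{-j}))]$, which is the claim (evaluated at truthful reports, as is implicit in the statement).

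The only real obstacle is the first step: formally arguing that $\mW^\star(\infty,\hat{\vc}_{-j})$ is the unconstrained-over-zero-$j$-column welfare maximizer. For the quadratic $f_j$ used in the paper this is transparent since $f_j(\mW)=0$ iff column $j$ vanishes, but for general $f_j$ I would either introduce it as a clarifying convention (define $\mW^\star(\infty,\cdot)$ as the limit / as the maximizer subject to $f_j=0$) or add a mild assumption that $f_j$ is coercive and zero only at the zero column. Everything else is a one-line inequality plus a column-wise separability observation.
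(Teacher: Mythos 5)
Your proposal is correct and follows essentially the same route as the paper's proof: compare the welfare of $\mW^\star(\infty,\vc_{-j})$ against the feasible candidate $\mW^{\star-j}(c_j,\vc_{-j})$, substitute into the VCG formula, and cancel the $\sum_{k\neq j}c_k f_k(\cdot)$ terms using the fact that $f_k$ depends only on column $k$. Your added care about why $\mW^\star(\infty,\cdot)$ optimizes the seller-$j$-excluded welfare, and your explicit statement of the column-wise cancellation, are refinements of steps the paper leaves implicit rather than a different argument.
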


\subsection{Characterization of the Truthful Payment Rules}

Myerson and VCG payment rules ensure both IC and IR for data sellers. How do they compare to each other? Theorem~\ref{thm - Myerson-smallest} shows that in general, Myerson is the smallest payment rule, thus optimal from the buyers' perspective. In certain scenarios, i.e. when $\mW$ lives in an unconstrained domain, we further have VCG payment equivalent to Myerson payment, which is proven in Theorem~\ref{Thm - Myerson-VCG-Equivalence}.

\begin{theorem}
\label{thm - Myerson-smallest}
     Myerson payment is the smallest IC and IR (for data sellers) payment rule.
\end{theorem}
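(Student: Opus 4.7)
The plan is to invoke the standard Myerson characterization of truthful mechanisms specialized to our single-parameter seller setting, in which seller $j$'s private type is the unit cost $c_j$, and the ``allocation'' from $j$'s perspective is the magnitude $\phi_j(\hat{c}_j) \coloneqq f_j(\mW^\star(\hat{c}_j, \hat{\vc}_{-j}))$. Fix $\hat{\vc}_{-j}$ throughout and write $P(\hat{c}_j) \coloneqq P_{\rightarrow j}(\hat{c}_j,\hat{\vc}_{-j})$. Claim~\ref{claim - fj-monotonicity} already supplies the key structural ingredient: $\phi_j$ is non-increasing in $\hat{c}_j$.

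First I would derive the envelope representation of every IC payment rule. Applying IC to a pair of reports $c, c'$ and to the swapped pair yields the sandwich
$$c\bigl(\phi_j(c) - \phi_j(c')\bigr) \;\leq\; P(c) - P(c') \;\leq\; c'\bigl(\phi_j(c) - \phi_j(c')\bigr).$$
Passing to the limit $c' \to c$ (or, more generally, integrating against the monotone $\phi_j$ in the Riemann--Stieltjes sense to avoid smoothness hypotheses) and then integrating by parts from $\hat{c}_j$ up to $\infty$ produces the canonical representation
$$P(\hat{c}_j) \;=\; \hat{c}_j \phi_j(\hat{c}_j) \;+\; \int_{\hat{c}_j}^{\infty} \phi_j(u)\,du \;+\; h(\hat{\vc}_{-j}),$$
where $h$ is an integration constant that does not depend on $\hat{c}_j$. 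The second step is then to pin $h$ down with IR. At truthful reporting, the utility reduces to $u_j = \int_{c_j}^{\infty} \phi_j(u)\,du + h(\hat{\vc}_{-j})$; since $\phi_j \geq 0$ the integral is non-negative and vanishes as $c_j \to \infty$, so IR across all types collapses to the single inequality $h(\hat{\vc}_{-j}) \geq 0$. The Myerson rule (\ref{eq: myerson}) is exactly the choice $h \equiv 0$, so it attains this lower bound pointwise, and every other IC+IR rule pays at least as much for every report. This yields the theorem.

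The main technical obstacle is justifying the tail behaviour used in the integration by parts, namely $u\,\phi_j(u) \to 0$ and the integrability of $\phi_j$ on $[\hat{c}_j,\infty)$; both are needed to define the envelope integral and to push $c_j \to \infty$ in the IR step. Intuitively, a socially optimal buyer will never pay a unit cost exceeding the marginal value of the data, so a sufficiently large report shuts seller $j$ out of $\mW^\star$; formalizing this requires only a mild regularity assumption on the buyers' valuations (e.g.\ $\nabla_{\mW_{:,j}} v_i$ bounded uniformly in $i$), under which the first-order conditions on $\mW^\star$ drive $\mW^\star_{:,j} \to \vzero$, and hence $\phi_j \to 0$, as $\hat{c}_j \to \infty$. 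Once that regularity is in hand, the remainder of the argument is mechanical.
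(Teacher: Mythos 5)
Your proposal is correct and follows essentially the same route as the paper's proof in Appendix~\ref{app: Proof-Myerson-smallest-payment}: derive the two-sided IC sandwich, pass to the envelope identity $P(\hat{c}_j)=\hat{c}_j f_j(\hat{c}_j)+\int_{\hat{c}_j}^{\infty}f_j(u)\,du + C$, and use IR to force $C\ge 0$, with Myerson attaining $C=0$. If anything you are slightly more careful than the paper, which silently assumes differentiability of $p$ and $f_j$ and does not justify that the integral term vanishes as $c_j\to\infty$ (needed for the ``only if'' direction of $C\ge 0$) — points your Riemann--Stieltjes remark and tail-behaviour discussion address explicitly.
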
\vspace{-0.5em}
Proof can be checked at Appendix~\ref{app: Proof-Myerson-smallest-payment}. 

\vspace{0.5em}
\begin{theorem}
\label{Thm - Myerson-VCG-Equivalence}
     Myerson payment is equivalent to the VCG payment when
     the domain of $\mW$ is unconstrained.\vspace{-1em}
\end{theorem}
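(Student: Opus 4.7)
The plan is to show that both payments reduce to a common integral expression by analyzing an ``indirect welfare'' function of seller $j$'s reported cost and invoking the envelope theorem. Concretely, holding $\hat{\vc}_{-j}$ fixed, define
\[
\Phi(u) \;:=\; \sum_{i \in \gB} v_i(\mW^\star(u, \hat{\vc}_{-j})) \;-\; u\,f_j(\mW^\star(u, \hat{\vc}_{-j})) \;-\; \sum_{k \in \gS \setminus \{j\}} \hat{c}_k f_k(\mW^\star(u, \hat{\vc}_{-j})),
\]
which is precisely the maximized reported social welfare from (\ref{eq: W_star}) when seller $j$ reports $u$. The VCG payment (\ref{eq: VCG payment rule}) then rewrites as
\[
P^{VCG}_{\rightarrow j} \;=\; \hat{c}_j\,f_j(\mW^\star(\hat{c}_j, \hat{\vc}_{-j})) \;+\; \Phi(\hat{c}_j) \;-\; \Phi(\infty),
\]
where $\Phi(\infty):=\lim_{u\to\infty}\Phi(u)$ coincides with the welfare of the market with seller $j$ effectively removed, since in the unconstrained domain the optimal $j$-th column of $\mW^\star$ shrinks to $\vzero$ as $u\to\infty$ (otherwise the objective would be driven to $-\infty$).

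The core step is the envelope theorem. Because $\mW$ lies in an unconstrained domain, the reported social welfare in (\ref{eq: W_star}) has an interior maximizer satisfying first-order conditions, and $u$ enters the objective only linearly through the term $-u\,f_j(\mW)$. Therefore
\[
\Phi'(u) \;=\; -f_j(\mW^\star(u, \hat{\vc}_{-j})),
\]
and the fundamental theorem of calculus gives $\Phi(\hat{c}_j) - \Phi(\infty) = \int_{\hat{c}_j}^{\infty} f_j(\mW^\star(u, \hat{\vc}_{-j}))\,du$. Substituting this into the rewritten VCG expression yields exactly the Myerson payment rule (\ref{eq: myerson}), concluding the equivalence.

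I expect the envelope step to be the main obstacle. Two subtleties need care: first, one must justify that $\Phi$ is absolutely continuous in $u$ so that the integral representation of $\Phi(\hat{c}_j)-\Phi(\infty)$ is valid; second, one must verify $\lim_{u\to\infty} u\,f_j(\mW^\star(u, \hat{\vc}_{-j})) = 0$ in order to identify $\Phi(\infty)$ with the seller-$j$-removed welfare. Both should follow from the differentiability of the $v_i$'s and $f_j$'s combined with the monotonicity already established in Claim~\ref{claim - fj-monotonicity}; the closed-form (\ref{eq: W-closed-form}) furnishes a clean sanity check since the $j$-th column there scales like $1/u$. The unconstrained hypothesis is genuinely used: if $\mW$ were confined to a bounded polytope, the maximizer could lie on a boundary, the envelope identity $\Phi'(u) = -f_j$ could break at kinks, and Theorem~\ref{thm - Myerson-smallest} would leave $P^{VCG}\geq P^{MS}$ as the only generic conclusion.
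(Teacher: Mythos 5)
Your proposal is correct and follows essentially the same route as the paper's proof: both define the maximized reported social welfare as a function of seller $j$'s report, establish $\Phi'(u) = -f_j(\mW^\star(u,\hat{\vc}_{-j}))$ via the first-order/envelope argument enabled by the unconstrained domain, and integrate from $\hat{c}_j$ to $\infty$ to identify the Myerson integral term with the VCG welfare difference. The additional subtleties you flag (absolute continuity of $\Phi$ and $\lim_{u\to\infty} u\,f_j(\mW^\star(u,\hat{\vc}_{-j}))=0$) are real but glossed over in the paper's version, so your treatment is, if anything, slightly more careful.
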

\begin{proof}[Proof sketch]
As $\mW^\star$ is chosen to maximize the reported SW, we have $\frac{\partial v_i(\mW^\star_{i,:})}{\partial w^\star_{i,j}} - \hat{c}_j \frac{\partial f(\mW^\star_{:,j})}{\partial w^\star_{i,j}} = 0,$ for all  $w_{i,j}$. Following this, we can show that $\frac{\partial SW(\mW^\star(\hat{c}_j, \hat{\vc}_{-j}))}{\partial \hat{c}_j} = -f_j$. Plugging this into the integral calculation of Myerson payment, we prove the claim. Full proof in App~\ref{app:proof-for-Myerson-VCG-Equivalence}.
\end{proof}
\textbf{Discussion.}
Myerson payment rule provides theoretical guarantees of being buyer-optimal. However, its computation can be costly since \( f_j(\mW^\star(u,\hat{\vc}_{-j})) \) requires optimizing \( SW(u, \hat{\vc}_{-j}) \), the complexity of which depends on the structure of \( v_i \) and \( f_j \). VCG payment, on the other hand, is more computationally feasible. Yet, it still requires leave-one-out retraining. Future work should explore more computationally efficient implementations.

\subsection{Can We Ensure IR for Data Buyers?}

Until now, we have focused solely on the seller side. But can our mechanism also benefit data buyers? Interestingly, we can redistribute the payments made to sellers among buyers based on their marginal contributions. This ensures SBB in the market and guarantees IR for data buyers when the performance function is subadditive.

 \begin{theorem}
 \label{thm:distribution rule}
     Assume $v_i$ is subadditive (i.e. has diminishing returns in $\mW$). A payment rule defined as follows is IR for all buyers. 
       \begin{equation}
       \label{eq: distribution rule}
      \eta_{ij} = \frac{v_i(\mW^\star(\vc)) - v_i(\mW^{\star -j}(\vc))}{\sum_{k \in \gB} v_k(\mW^\star(\vc)) - v_k(\mW^{\star -j}(\vc))},\qquad  P_{i\rightarrow j } = \eta_{ij} P_{\rightarrow j }
  \end{equation}
 \end{theorem}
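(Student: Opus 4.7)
The plan is to bound each buyer's total payment $P_{i\rightarrow} := \sum_{j\in\gS} P_{i\rightarrow j}$ from above by $v_i(\mW^\star(\vc))$; this is exactly the IR condition $u_i = v_i(\mW^\star(\vc)) - P_{i\rightarrow} \ge 0$.

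First I would invoke Claim~\ref{claim: upper bound of vcg}, which gives $P_{\rightarrow j} \le \sum_{k\in\gB}\bigl[v_k(\mW^\star(\vc)) - v_k(\mW^{\star -j}(\vc))\bigr]$ (this applies whether the market uses VCG directly or Myerson, since by Theorem~\ref{thm - Myerson-smallest} the Myerson payment is no larger). The denominator of $\eta_{ij}$ is precisely this upper bound, so multiplying cancels cleanly to the pointwise bound
\[
P_{i\rightarrow j} = \eta_{ij}\,P_{\rightarrow j} \;\le\; v_i(\mW^\star(\vc)) - v_i(\mW^{\star -j}(\vc)),
\]
and summing over $j\in\gS$ yields $P_{i\rightarrow} \le \sum_{j\in\gS}\bigl[v_i(\mW^\star(\vc)) - v_i(\mW^{\star -j}(\vc))\bigr]$.

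The heart of the proof is then showing that diminishing returns forces this sum to be at most $v_i(\mW^\star(\vc))$. I would fix any enumeration $j_1,\dots,j_n$ of $\gS$ and define $\mW^{(k)}$ to be $\mW^\star(\vc)$ with columns $j_{k+1},\dots,j_n$ zeroed out, so that $\mW^{(0)}=\vzero$ and $\mW^{(n)}=\mW^\star(\vc)$. Reading ``diminishing returns in $\mW$'' as submodularity of the column-activation set function $T\mapsto v_i(\mW^\star|_T)$, the marginal value of reactivating column $j_k$ is largest on the smallest active set; since $\{j_1,\dots,j_{k-1}\}\subseteq\gS\setminus\{j_k\}$ this yields
\[
v_i(\mW^\star(\vc)) - v_i(\mW^{\star -j_k}(\vc)) \;\le\; v_i(\mW^{(k)}) - v_i(\mW^{(k-1)}).
\]
Summing over $k$ telescopes the right-hand side to $v_i(\mW^\star(\vc)) - v_i(\vzero) = v_i(\mW^\star(\vc))$, using $v_i(\vzero) = l_i(\vzero) - l_i(\vzero) = 0$ from the definition of the performance function. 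Chaining the two bounds gives $P_{i\rightarrow}\le v_i(\mW^\star(\vc))$, hence $u_i\ge 0$.

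The main obstacle is this last step: pinning down precisely what ``subadditive / diminishing returns in $\mW$'' means on subsets of columns. Pure set-subadditivity is in fact not enough, as small three-seller examples can inflate the leave-one-out marginals above $v_i(\mW^\star)$; the telescoping argument genuinely needs the stronger submodular reading that the parenthetical ``diminishing returns in $\mW$'' is intended to convey. I would therefore state that property explicitly at the beginning of the proof before running the ordering argument above.
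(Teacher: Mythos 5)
Your proof follows essentially the same route as the paper's: bound $P_{\rightarrow j}$ via Claim~\ref{claim: upper bound of vcg} (noting Myerson is no larger by Theorem~\ref{thm - Myerson-smallest}), cancel against the denominator of $\eta_{ij}$, sum over sellers, and close with the diminishing-returns property. Your explicit telescoping argument fills in the final inequality that the paper justifies only by the word ``subadditive,'' and your observation that plain set-subadditivity does not suffice --- the step genuinely needs the submodular reading signalled by the parenthetical ``diminishing returns'' --- is a correct and worthwhile sharpening rather than a deviation.
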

\begin{proof}[Proof sketch]
From Theorem~\ref{thm - Myerson-smallest}, we know that the Myerson payment is always less than or equal to the VCG payment. In Claim~\ref{claim: upper bound of vcg}, we further upper bound the VCG payment by $\overline{P}_{\rightarrow j}$. It is straightforward to verify that distributing $\overline{P}_{\rightarrow j}$ among the data buyers according to the distribution rule in Equation~(\ref{eq: distribution rule}) ensures IR for all participants. Any smaller payment would also satisfy the IR condition by construction. A detailed proof is provided in Appendix~\ref{app: proof-payment-redistribution}.
\end{proof}

\begin{remark}
When the performance function 
$v_i$ is super-additive, we currently lack a solution that guarantees individual rationality (IR) for the buyer. Whether it is possible to design a suitable buyer payment mechanism remains an open question.
\vspace{-1em}
\end{remark}

\section{Experiments with RAG Market}
In this section, we show how our data-market framework applies to a realistic setting: compensating content contributors in a RAG market. Unlike the earlier setup with continuous allocation, the RAG setting features discrete allocation. The core takeaway is unchanged: Data Shapley and LOO do not generally ensure truthfulness, whereas Myerson and VCG payments do.

\subsection{What is RAG?}
RAG is short for Retrieval-Augmented Generation~\citep{lewis2021retrievalaugmentedgenerationknowledgeintensivenlp}. Before generating an answer, it retrieves relevant documents from a knowledge source, then provides these documents as context for LLMs to produce more accurate and grounded responses. It consists of two main parts: a retriever $R$ and a generator $G$. $R$ takes in the user query $q$ and a document corpus of $s$ documents $\gS$ (i.e. $|\gS| = s$), and returns $R(q, \gS) \rightarrow \mathbb{R}^{s}$ the set of all relevance scores. Top $n$ documents are picked based on the relevance score, resulting $\gC_n \subset \gS$. Usually there is a post-retrieval process to rerank the documents (typically via an LLM judge) to avoid information overload~\citep{gao2024retrievalaugmentedgenerationlargelanguage}. After reranking, a subset of $k$ documents $\gC_k \subset \gS$ is passed to the generator $G$ to generate output $G(\gC_k)$.

In practice, there is a max $k \leq n$ limit on the documents that RAGs can utilize. This may be because of constraints on the context-window (can fit at most $k$ documents) or on the inference latency (latency typically scales quadratically with context length). Hence, we limit the selection $\gC$ to size $k$.

\begin{figure}[t!]
    \centering
\includegraphics[width=.9\linewidth]{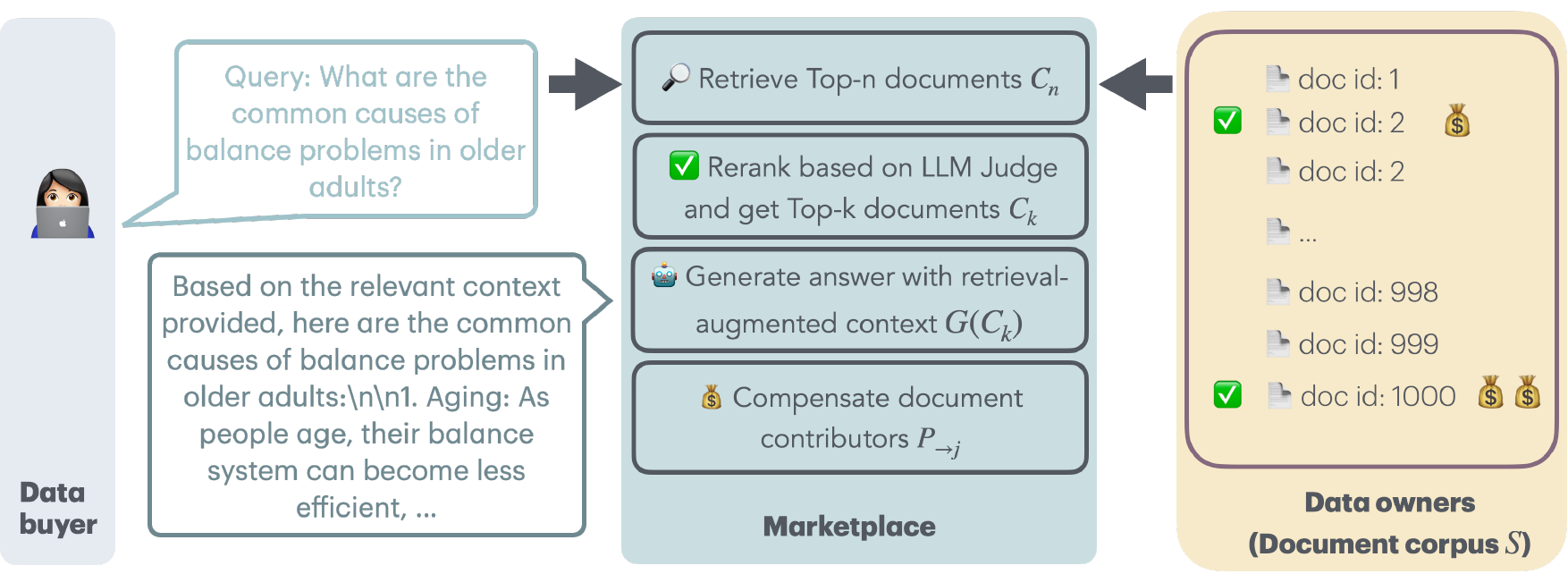}
    \caption{A example of our designed RAG marketplace. Data owners supply proprietary curated datasets to improve question answering in exchange for compensation. Each of the curated dataset varies in relevance to the buyer, quality of the dataset itself, as well as the cost to the data owner. Experimental details are  presented in Appendix~\ref{app: RAG-market}.}
    \label{fig:rag-market}
    \vspace{-3mm}
\end{figure}
\subsection{RAG Marketplace Setup}

In this marketplace, we want to retrieve relevant documents according to user queries at low costs, and at the same time, incentivize document owners to report true data costs. Contrast to the previous analysis, here allocation $\vw \in \{0,1\}^s$ is discrete, with each entry indicating whether a document is utilized at the end, i.e., after reranking. In the RAG setting, we consider a single-buyer scenario, since in practice users are charged per query (i.e., per purchase). Consequently, $\vw$ is a vector rather than a matrix. We choose $f_j(\vw) = w_j^2$, that is  $f_j=w_j \in \{0,1\}$. 

\textbf{Valuation.} For a selected subset $\gC \subseteq \gS$ where $\gC = \{\text{document}_i\ |\ w_i=1\}_{i \in [s]}$. The resulting valuation of a selection $\vw \in \{0,1\}^s$ is defined as the LLM Judge's evaluation score over the response $\gG(\gC)$, on how the context-enhanced response answers the query $q$.
\vspace{-0.3em}
\begin{align}
\vspace{-0.3em}
    v(\vw) = \text{LLM-Judge}(q, G(\gC)) \in [0,10] \text{ evaluating the response } G(\gC) \text{ on the query } q.
\end{align}

\textbf{Selection/allocation.} The standard RAG process does not take costs into account. To address this, we modify the reranking process. When reranking, we prioritize documents with low costs. 
The allocation is thus a result of similarity-based retrieval and cost-aware reranking in (\ref{eq: RAG-objective}). Given the reported costs $\hat{c}_1, \dots \hat{c}_s$, we find the allocation $\vw$ that maximizes the social welfare i.e.\vspace{-0.3em}
\begin{equation}
\label{eq: RAG-objective}
    \vw^* := \argmax_{\vw}  \left [SW = v(\vw)  - \hat{\vc} ^\top \vw \right], \text{ or equivalently }\:
\end{equation}
\begin{equation}
\begin{aligned}
    \{w^*_i &:= 1 \text{ if } \text{doc}_i \in \gC^* \text{ o. w. }0 \}_{i \in [s]} \text{, where } \\
    \gC^* & := \argmax_{\gC' \subseteq \gC_n \text{ s.t. } |\gC'| \leq k}  \text{LLM-Judge}(q, G(\gC')) - \sum_{\{i: \text{doc}_i \in \gC'\}} \hat{c}_i 
\end{aligned}\vspace{-0.3em}
\end{equation}

\textbf{Payment.} We only pay a document provider if it gets picked in the $\gC_n$ i.e. it has sufficiently high similarity to the query as judged by the retriever. The exact payment is determined by the marketplace and the specific mechanism as shown in Figure~\ref{fig:rag-market}. Note that all formulas from previous sections defined in the continuous domain still apply, apart from Myerson payment. In this specific case, $f_j(\cdot)$ is a step function that jumps from 1 to 0 when the reported unit cost is over a threshold $\bar{c}_j$. Thus, Myerson payment turns into $P_{\rightarrow j}^{\text{MS}} = c_j + (\bar{c}_j - c_j)$. $\bar{c}_j$ is the largest unit cost seller $j$ can report beyond that it will no longer gets retrieved by the reranker. We present pseudocode for an instance of payment calculation in Appendix~\ref{app: pseudocode-RAG-payment}.

\subsection{Comparing and Evaluating Payment Mechanisms}

\paragraph{Unique dynamics in the discrete case.} Unlike the continuous case, where LOO and Shapley payments vary with the reported cost, in the discrete RAG setting, these payments are independent of the reported cost. This is because LOO/Shapley payments are dependent on the buyer valuation $v(\vw(\hat{\vc}))$. Since each entry of $\vw$ can either be 0 or 1, once a document is retrieved ($w_j =1$), buyer's valuation won't change with $\hat{c}_j$. However, the retrieval decision itself still depends on the reported cost. Since Shapley/LOO payments have no IR guarantees, it can be the case that Shapley/LOO payments cannot cover the data costs when the data unit costs are relatively large. In this scenario, data sellers will over-report to not be picked by the reranker, resulting in retrieval failures.

\paragraph{Real-life validation.} We carry out a real-world experiment in a challenging medical-domain chatbot setting. A user can pose a medical question, and the language model can search among the provided medical knowledge base and offer retrieval grounded answers. The knowledge base is constructed using MedQuAD dataset~\citep{BenAbacha-BMC-2019}. We use DeepSeek-R1~\citep{deepseekai2025deepseekr1incentivizingreasoningcapability} to grade\footnote{We validate the LLM judge's scoring quality in Appendix~\ref{app: LLM-judge-credibility}} the retrieval-augmented answers by Qwen2.5-3B model~\citep{qwen2.5}. The experimental results are presented in Figure~\ref{fig: rag-market-utilities-and-payments}, which confirms the above dynamics. Furthermore, under the LOO, VCG, and Myerson payment mechanisms, only the retrieved documents in the final set $\gC_k$ receive compensation. In contrast, the Shapley method allocates payment to all documents in $\gC_n$, offering a more equitable approach for content contributors. As expected, Myerson and VCG mechanisms maintain truthfulness and individual rationality in all cases. In this specific scenario, we further have Myerson payment equivalent to VCG payment. The same analysis applies to multi-document retrieval, as illustrated by an example in Appendix~\ref{app: illustrative-examples}.

\paragraph{Largely reduced computational complexity in the discrete case.} Because $\vw$ takes discrete values, computing the Myerson and VCG payments becomes inexpensive, making the approach more practical for real-world deployment. Concretely, the integral in Myerson payment corresponds to the area under the curve of $f_j(\cdot)$ from $c_j$ to $\infty$. In the discrete setting, $f_j$ reduces to a step function that drops from 1 to 0 once document $j$ is no longer selected by (\ref{eq: RAG-objective}). For VCG payment, the extra complexity lies in one more time of generation and retrieval, which is rather fast as well. 
\vspace{-0.5em}

\begin{figure}[t!]
\centering
     \begin{subfigure}{0.48\linewidth}
    \includegraphics[width=\linewidth]{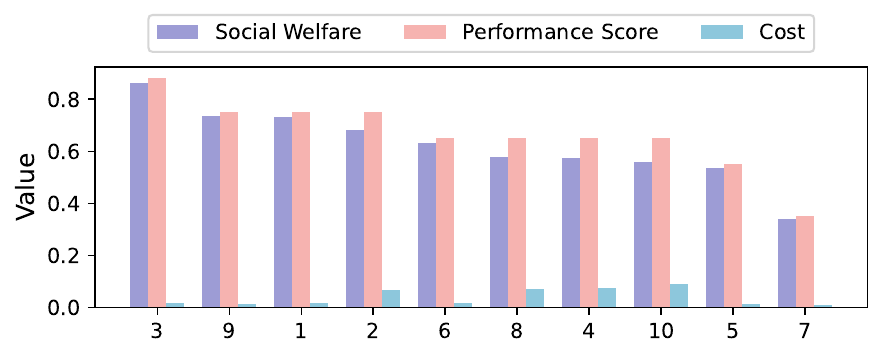}
    \end{subfigure}
    \begin{subfigure}{0.45\linewidth}
    \includegraphics[width=\linewidth]{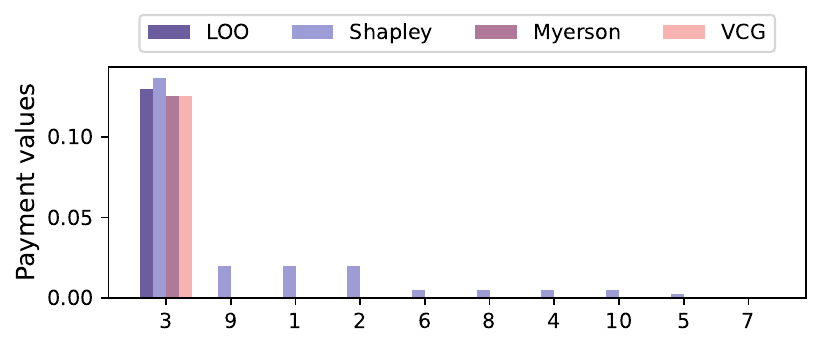}

    \end{subfigure}
      \vspace{0.5em}
     \begin{subfigure}{0.48\textwidth}
    \includegraphics[width=\linewidth]{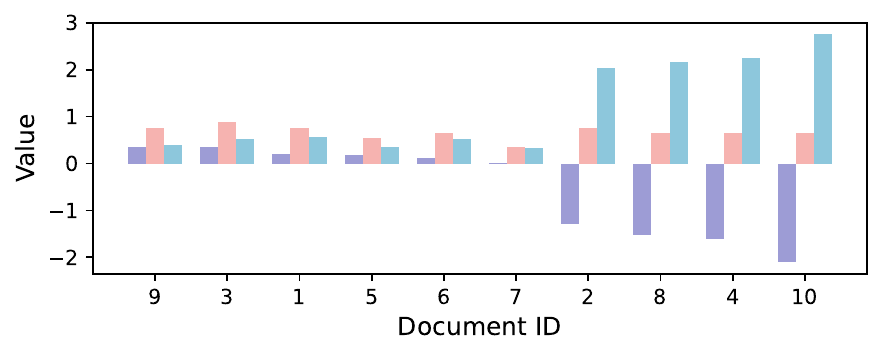}
    \end{subfigure}
    \begin{subfigure}{0.45\textwidth}
    \includegraphics[width=\linewidth]{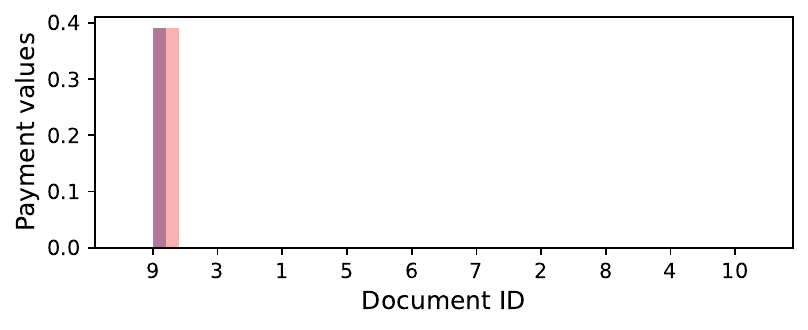}

    \end{subfigure}
    \caption{RAG market with $(n,k)=(10,1)$ (document IDs are ranked according to the corresponding social welfare when the respective document is retrieved). The documents' unit costs are randomly generated. Upper row: low unit costs ($\hat{\vc}$); bottom row: high unit costs. With higher unit costs, LOO and Shapley payments fail to cover the data costs, incentivizing document owners to overreport; as a result, no documents are retrieved or compensated. }
    \label{fig: rag-market-utilities-and-payments}
    \vspace{-0.5em}
\end{figure}%

\section{Conclusion}\vspace{-0.5em}

We revisit data valuation through the lens of market design and highlight the shortcomings of commonly used methods like Leave-One-Out and Data Shapley when applied as pricing rules. We demonstrate that these approaches can incentivize strategic misreporting, undermining market efficiency. By adapting classical mechanisms such as Myerson and VCG to the data market context, we provide truthful, individually rational, and socially efficient alternatives. Our framework not only offers theoretical guarantees but is also practical, as shown in applications like RAG market. 

Our findings open several directions for future work: 1) how to make the existing truthful payment rule more computationally efficient? 2) how to improve the truthfulness of existing data valuation methods? 3) how to handle other strategic behaviors by sellers like adversarial data?, and 4) designing McAffee's trade reduction style mechanisms~\citep{mcafee1992dominant} when both buyers and sellers have private valuations. More generally, a critical question facing research on data-market design is to identify the exact inefficiencies in the marketplace that prices could alleviate. We point out that when data costs are private and heterogeneous, prices can be used to truthfully surface these costs. There are likely other roles for prices as well. Taken together, we hope our work initiates an important and fruitful line of research on designing practical pricing strategies with game-theoretic considerations.

\bibliography{iclr2025_conference}
\bibliographystyle{iclr2025_conference}

\appendix

\section{Limitations}
\label{app: limitations}
Our analysis assumes a one-sided market where buyers' valuation functions are known—a simplification that enables theoretical traceability, but is rather unrealistic. We also consider the more general two-sided setting where buyers' valuations are private; however, only impossibility results can be established in this case.

Additionally, our theoretical results are derived under the assumption of a continuous allocation domain for 
$\mW$. In the discrete setting, such as in our RAG market example, we provide only empirical findings. We encourage future work to extend these results and develop a more comprehensive understanding across both settings.

\section{Statements}

\subsection{Ethics Statement}
\label{app: ethics-statement}
We target an emerging scenario, as data is becoming more important than ever. Our work has a positive societal impact by offering a mechanism to compensate data contributors, potentially mitigating copyright violations. Furthermore, the proposed payment scheme discourages strategic misreporting by sellers, fostering a healthier and more efficient data marketplace.

\subsection{LLM usage statement}
We used LLMs to polish writing as well as modifying plotting scripts. Furthermore, we conducted experiments using LLMs as generators and judges in the RAG market. We detailed the usage in Appendix~\ref{app: RAG-market}.

\section{Missing Proofs}
\subsection{Section 3 Proofs}

\subsubsection{Proof for the existence of optimal allocation}
\label{app:existence-of-W}

To guarantee the feasibility of the problem, we need to ensure that $\mW^\star$ can be obtained as optimum for 
\begin{equation*}
    max [ g(\mW) \coloneqq \sum_{i \in \mathcal{B}} v_i(\mW) - \sum_{j \in \mathcal{S}} \hat{c}_j f_j(\mW) ]
\end{equation*}

Apart from the continuity assumption, we further have $v_i$ upper bounded by $l_i(0)$ and $f_j$ bounded from below (by definition, we have $f_j \geq 0$), which stops the objective $g(\mW)$ from exploding upward. 

Now, it is left to show that the optimum is attainable. According to Claim~\ref{claim - fj-monotonicity}, $f_j$ is non-decreasing in the norm of j-th column of $\mW$, denoted as $\mW_{:,j}$, we have $\lim_{\mW \rightarrow \infty} g(\mW) \rightarrow -\infty$. Thus, we have the argmax $\mW^\star < \infty$, that is, $\mW^\star$ is feasible. 

\subsection{Section 4 Proofs}

\subsubsection{Proof for Claim~\ref{claim - fj-monotonicity}}

\label{app: proof for fj monotonicity}

$\mW^\star(\hat{\vc})$ is chosen to optimize the following function according to our setup
\[
 \max  \sum_{i \in \gB} v_i(\mW^\star(\hat{\vc})) - \sum_{j \in \gS} \hat{c}_j f_j(\mW^\star(\hat{\vc}))
\]

We prove by contradiction. Suppose there exist two reported costs $c_j < c_j'$ such that:
\[
    f_j(\mW^\star(c_j', \vc_{-j})) > f_j(\mW^\star(c_j, \vc_{-j}))
\]

By optimality of the allocation $\mW$, we have that:
\begin{equation}
\begin{aligned}
&\sum_{i \in \mathcal{B}} v_i(\mW^\star(c_j, \vc_{-j})) - c_j f_j(\mW^\star(c_j, \vc_{-j})) - \sum_{k \neq j} c_k f_k(\mW^\star(c_j, \vc_{-j})) \\
&\quad \geq  \sum_{i \in \mathcal{B}} v_i(\mW^\star(c_j', \vc_{-j})) - c_j f_j(\mW^\star(c_j', \vc_{-j})) - \sum_{k \neq j} c_k f_k(\mW^\star(c_j', \vc_{-j}))
\end{aligned}
\end{equation}

and similarly,

\begin{equation}
\begin{aligned}
&\sum_{i \in \mathcal{B}} v_i(\mW^\star(c_j', \vc_{-j})) - c_j' f_j(\mW^\star(c_j', \vc_{-j})) - \sum_{k \neq j} c_k f_k(\mW^\star(c_j', \vc_{-j})) \\
&\quad \geq  \sum_{i \in \mathcal{B}} v_i(\mW^\star(c_j, \vc_{-j})) - c_j' f_j(\mW^\star(c_j, \vc_{-j})) - \sum_{k \neq j} c_k f_k(\mW^\star(c_j, \vc_{-j})).
\end{aligned}
\end{equation}

Adding these two inequalities together, we find:
\[
(c_j' - c_j) \left[f_j(\mW^\star(c_j', \vc_{-j})) - f_j(\mW^\star(c_j, \vc_{-j}))\right] \leq 0.
\]
Since by assumption $c_j' > c_j$, it must be the case that:
\[
f_j(\mW^\star(c_j', \vc_{-j})) \leq f_j(\mW^\star(c_j, \vc_{-j})),
\]
contradicting our initial assumption. Therefore, the function $f_j$ is monotonically non-increasing in the seller's reported cost $c_j$

\subsubsection{Derivation of W in Mean Estimation Market}
\label{append: W-closed-form}

We offer the proof for the derivation of closed-form $\mW$ in Section~\ref{ sec: exp-mean-est-market}. The construction of mean estimation market is inspired by Theorem 2 from \citet{9746007}.

\begin{equation}
    SC \coloneqq  \mathbb{E} \sum_{i \in \gB}\|\sum_{j \in \gS}w_{i,j}\hat{\vmu}_j - \vmu_i\|^2 + \sum_{j \in \gS} \hat{c}_j \sum_{i \in \gB} w_{i,j}^2
\end{equation}

\begin{equation}
\begin{aligned}
     & \mathbb{E} \|\sum_{j \in \gS}w_{i,j}\hat{\vmu}_j -\sum_{j \in \gS}w_{i,j}\vmu_j + \sum_{j \in \gS}w_{i,j}\vmu_j- \vmu_i\|^2 + \sum_{j \in \gS} \hat{c}_j  w_{i,j}^2 \\
     =& \sum_{j \in \gS} w^2_{i,j} \sigma_j^2 + (\sum_{j \in \gS} w_{i,j} \vmu_j)^\top(\sum_{j \in \gS} w_{i,j} \vmu_j)   \\
     & - 2 \sum_{j \in \gS} w_{i,j} \vmu_j^\top \vmu_i  + \vmu_i^\top \vmu_i + \sum_{j \in \gS} \hat{c}_j  w_{i,j}^2 \\
     = & \mW_{i,:} (\mV+\mA) \mW_{i,:}^\top + \mW_{i,:} \mC \mW_{i,:}^\top - 2 \mW_{i,:} \mB_{i,:}^\top+ \vmu_i^\top \vmu_i
\end{aligned}
\end{equation}
Let $\mB = [\langle \vmu_i, \vmu_j \rangle]_{i \in \gB, j\in \gS}$, $\mC = [\langle \vmu_i, \vmu_k \rangle]_{i,j \in \gS}$, $\mV = diag(\sigma_j^2)$ and $\mA = diag(\hat{c}_j)$. We have 
\[
\mW_{i,:}^\top = (\mV+\mA+\mC)^{-1} \mB_{i,:}^\top
\]

Thus,
\[
\mW^\star(\hat{\vc}) = \mB(\mV+\mA+\mC)^{-1} 
\]

\[
SC = \sum_{i \in \gB} \left[\vmu_i^\top \vmu_i -  \mB_{i,:}(\mV+\mA+\mC)^{-1} \mB_{i,:}^\top \right]
\]

\subsubsection{Proof for Claim~\ref{claim: LOO-leads-to-misreport}}
\label{app:proof-LOO-misreport}
\begin{proof}
For seller $j$, the utility is
\begin{equation}
\begin{aligned}
    u_j & =  P^{LOO}_{\rightarrow j} - c_j f(\mW^\star (\hat{c}_j, \hat{\vc}_{-j})) \\
    & = \sum_{i \in \gB} v_i(\mW^\star (\hat{c}_j, \hat{\vc}_{-j})) - v_i(\mW^\star (\infty, \hat{\vc}_{-j})) -  c_j f(\mW^\star (\hat{c}_j, \hat{\vc}_{-j}))
\end{aligned}
\end{equation}

Let the constrained set be
\[
\Omega = \left\{ \mathbb{R}^{|\mathcal{B}| \times |\mathcal{S}|} \;\middle\vert\; \psi_q(\mW) \leq 0 \;\; \text{for every } q \right\},
\]
where $q$ is an index labeling the different constraint functions imposed on the allocation matrix $\mW$.  

Let $\Phi(\mW^\star(\mathbf{c}))$ denote the social welfare when the costs are $\mathbf{c} \in \mathbb{R}^{|\mathcal{S}|}$.  
The stationarity condition from the Karush–Kuhn–Tucker (KKT) conditions is
\[
\nabla_{\mW} \Phi(\mW^\star(\hat{\mathbf{c}})) 
+ \sum_q \lambda_q \nabla_\mW \psi_q (\mW^\star(\hat{\mathbf{c}})) = 0,
\]
together with complementary slackness
\[
\lambda_q \psi_q(\mW) = 0, \quad \lambda_q \geq 0.
\]

We now check
\[
\frac{d}{d \hat{c}_j} \Phi(\mW^\star(\hat{\mathbf{c}})) 
= \nabla_\mW \Phi(\mW^\star(\hat{\mathbf{c}})) \cdot \frac{d \mW^\star(\hat{\mathbf{c}})}{d \hat{c}_j}  
+ \frac{\partial \Phi(\mW^\star(\hat{\mathbf{c}}))}{\partial \hat{c}_j}.
\]

The first term vanishes by the KKT condition, while the second term equals $-f_j(\mW^\star(\hat{\mathbf{c}}))$.  
Thus,
\[
\frac{d}{d \hat{c}_j} \Phi(\mW^\star(\hat{\mathbf{c}})) = -f_j(\mW^\star(\hat{\mathbf{c}})).
\]

Next, consider the utility of seller $j$:
\[
u_j = P_{\rightarrow j}^{\text{LOO}} - c_j f_j(\mW^\star(\hat{\mathbf{c}})).
\]
By expanding $P_{\rightarrow j}^{\text{LOO}}$, we can write
\[
u_j = \Phi(\mW^\star(\hat{\mathbf{c}})) 
+ (\hat{c}_j - c_j) f_j(\mW^\star(\hat{\mathbf{c}})) 
+ \sum_{k \neq j} \hat{c}_k f_k(\mW^\star(\hat{\mathbf{c}})) 
+ h_{-j},
\]
where $h_{-j}$ does not depend on $\hat{c}_j$.

Differentiating with respect to $\hat{c}_j$ gives
\[
\frac{d u_j}{d \hat{c}_j} 
= (\hat{c}_j-c_j)\underbrace{\frac{\partial f_j(\mW^\star(\hat{\mathbf{c}}))}{\partial \hat{c}_j}}_{\leq 0} + \underbrace{\sum_{k \in \gS \backslash \{j\}}   \hat{c}_k \frac{\partial f_k(\mW^\star(\hat{\mathbf{c}}))}{\partial \hat{c}_j}}_{ \neq 0}
\]

Since $\sum_{k \in \gS \backslash \{j\}}   \hat{c}_k \frac{\partial f_k}{\partial \hat{c}_j} \neq 0$, as the allocation to other buyers will change with $\hat{c}_j$. In order to have optimality, we would have $\hat{c}_j \neq c_j$.

\end{proof}

\subsection{Section 5 Proofs}

\subsubsection{Proof for Myerson (Section~\ref{sec: myerson}) and VCG payments (Section~\ref{sec: vcg-payment}}
\label{app: proof-VCG-IC}

\textbf{Myerson Payments.} $f_j$ is monotonically non-increasing in $\hat{c}_j$, which we prove in Claim~\ref{claim - fj-monotonicity}. We prove by illustration in Figure~\ref{fig:myerson-proof}. When seller $j$ reports truthfully, the utility is equivalent to the size of the blue area in the leftmost figure ($s_1$). When seller $j$ under-reports, the utility becomes $s_1-s_2$; when seller $j$ over-reports, the utility becomes $s_1-s_3$. Thus, Myerson payment results in the highest utility when the report is truthful. By design, it is IR, as the utility $s_1$ is greater than 0. 

\begin{figure}[h!]
    \centering
    \includegraphics[width=0.8\linewidth]{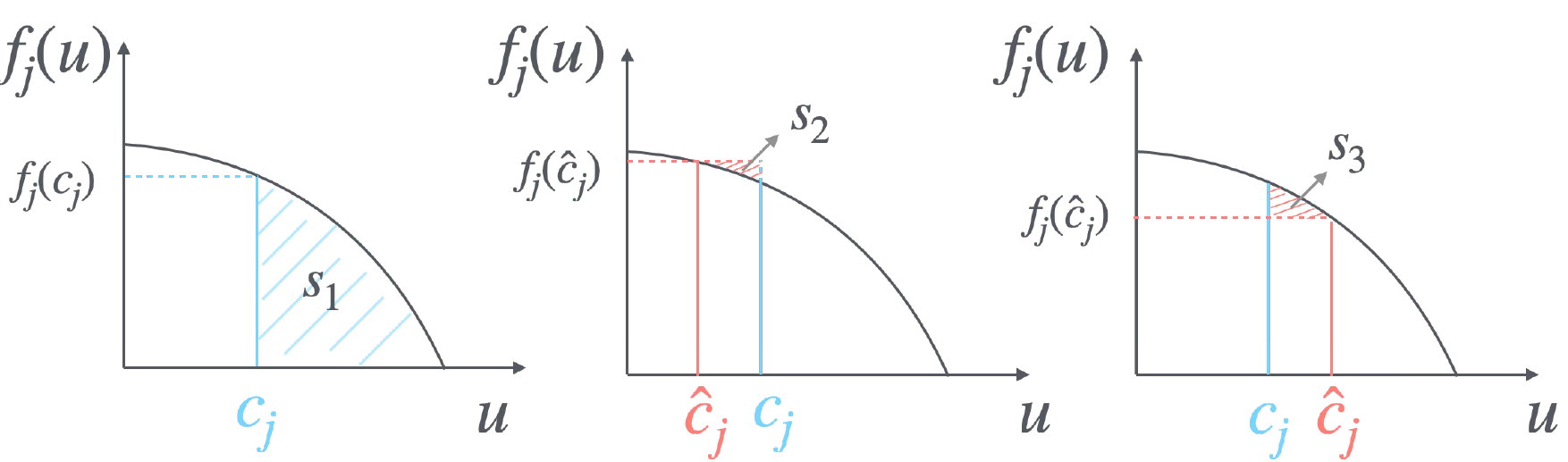}
    \caption{Myerson payment illustration. }
    \label{fig:myerson-proof}
\end{figure}

\textbf{VCG Payments.} Let data seller $j$ have true cost $c_j$ with reported cost $t$. The other sellers have their reported costs $ \mathbf{\hat{c}}_{-j}$. 

We have seller $j$'s utility being
\begin{equation}
\begin{aligned}
    u_j (t) & = P_i^{vcg} - c_j f_j(\mW^\star(t, \mathbf{\hat{c}}_{-j})) \\
    & = \sum_{i \in \mathcal{B}} v_i(\mW^\star(t, \mathbf{\hat{c}}_{-j})) - \sum_{k \in \mathcal{S} \backslash \{j\}} \hat{c}_k f_k(\mW^\star(t, \mathbf{\hat{c}}_{-j})) - c_j f_j(\mW^\star(t, \mathbf{\hat{c}}_{-j})) -h_{-j}
\end{aligned}
\end{equation}

where $h_{-j}$ is a term that is not dependent on $t$. Ignoring $h_{-j}$, this becomes the social welfare when the costs are $(c_j, \mathbf{\hat{c}}_{-j})$, 

By definition, $\mW^\star(t, \mathbf{\hat{c}}_{-j}))$ optimizes the social welfare when the costs are $(t, \mathbf{\hat{c}}_{-j})$. But seller $j$'s true utility depends on the actual cost $c_j$. Therefore, the only way seller $j$ ensures the system selects an allocation that maximizes their utility is to report truthfully, i.e., set $t=c_j$

\subsection{Proof for Claim~\ref{claim: upper bound of vcg}}
\label{app: proof-vcg-upper-bound}
\begin{proof}
    For $\gB \cup \gS \backslash \{j\}$, we have $\mW^\star(\infty, \vc_{-j})$ being the maximum of $SW(\vc_{-j})$. This gives
    \begin{equation}
    \label{eq: ineq-1}
    \begin{aligned}
       & \sum_{i \in \gB} v_i(\mW^\star(\infty, \vc_{-j})) - \sum_{k \in \gS \backslash \{j\}} c_k f_k(\mW^\star(\infty, \vc_{-j})) \geq   \\
       & \sum_{i \in \gB} v_i(\mW^{-j}(c_j, \vc_{-j})) - \sum_{k \in \gS \backslash \{j\}} c_k f_k(\mW^{-j}(c_j, \vc_{-j}))
    \end{aligned}  
    \end{equation}
    Plugging (\ref{eq: ineq-1}) in (\ref{eq: VCG payment rule}) gives the claimed inequality.
\end{proof}

\subsubsection{Proof for Theorem~\ref{thm - Myerson-smallest}}
\label{app: Proof-Myerson-smallest-payment}
\begin{proof}
For the rest, we denote $f(\mW^\star(u, \hat{\vc}_{-j}))$ as $f_j(u)$.

    From the IC condition, we have
    \begin{equation}
        p(c_j) - c_jf_j(c_j) \geq p(c_j') - c_jf_j(c_j'),
    \end{equation}
     \begin{equation}
        p(c_j') - c_j'f_j(c_j') \geq p(c_j) - c_j'f_j(c_j),
    \end{equation}
    \begin{equation}
        c_j'[f(c_j)-f(c_j')] \geq p(c_j)-p(c_j') \geq c_j[f(c_j)-f(c_j')]
    \end{equation} 
    Let $c_j' = u$, and $c_j = u+h$, with $h \rightarrow 0_+$, we have
    \begin{equation}
        u f'(u) \geq p'(u) \geq (u+h)f'(u)
    \end{equation}
    and thus
    \begin{equation}
        uf'(u) = p'(u)
    \end{equation}
    We have 
    \begin{equation}
    \begin{aligned}
         p(u) = \int_{\infty}^u p'(u) & = \int_{\infty}^u uf'(u) \\
         & = uf(u) -\int_{\infty}^u f(u) du \\
         & = uf(u) + \int_u^{\infty} f(u) du + C
    \end{aligned}
    \end{equation}
    Constrained to IR, we need to have $\int_u^{\infty} f(u) du + C \geq 0$. The integral is non-negative, so IR is satisfied iff  $C \geq 0$.  When $C=0$, we have $p(u)$ equivalent to Myerson payment.
\end{proof}

\subsubsection{Detailed proof for Theorem~\ref{Thm - Myerson-VCG-Equivalence}}
\label{app:proof-for-Myerson-VCG-Equivalence}
\begin{proof}
    Keep other sellers' reported $c_k$ $(k \neq j)$ fixed, we have social welfare as a function of seller $j$' reported $c_j$
\begin{equation}
    SW(\mW^\star(\hat{c}_j, \hat{\vc}_{-j})) = \sum_{i \in \gB} v_i(\mW^\star_{i,:}(\hat{c}_j, \hat{\vc}_{-j})) - \sum_{j \in \gS} \hat{c}_j f(\mW^\star_{:,j}(\hat{c}_j, \hat{\vc}_{-j}))
\end{equation}
As $\mW^\star$ is chosen by the system to maximize the reported social welfare, we have

\begin{equation}
\label{eq: stationarity-cond}
    \frac{\partial v_i(\mW^\star_{i,:})}{\partial w^\star_{i,j}} - \hat{c}_j \frac{\partial f(\mW^\star_{:,j})}{\partial w^\star_{i,j}} = 0, \qquad \forall w_{i,j}
\end{equation}

We further show that $\frac{\partial SW(\mW^\star(\hat{c}_j, \hat{\vc}_{-j}))}{\partial \hat{c}_j} = -f_j$
\begin{equation}
\begin{aligned}
    \frac{\partial SW(\mW^\star(\hat{c}_j, \hat{\vc}_{-j}))}{\partial \hat{c}_j}  = & \sum_{i\in \gB} \sum_{k\in \gS}\frac{\partial v_i(\mW^\star_{i,:})}{\partial w^\star_{i,k}}\frac{\partial w^\star_{i,k}}{\partial \hat{c}_j} \\
    & - f(\mW^\star_{:,j}) -\sum_{k \in \gS} \hat{c}_k  \sum_{i\in \gB} \frac{\partial f_k(\mW^\star)}{\partial w^\star_{i,k}}\frac{\partial w^\star_{i,k}}{\partial \hat{c}_j} \\
     \overset{(\ref{eq: stationarity-cond})}{=} & - f(\mW^\star_{:,j})
\end{aligned}
\end{equation}

Thus, we can calculate the integral part of the Myerson payment as 
\begin{equation}
\begin{aligned}
     \int_{c_j}^\infty f(\mW^\star_{:,j}(u, \hat{\vc}_{-j})) d u  =  SW(\mW^\star(\hat{c}_j, \hat{\vc}_{-j})) -SW(\mW^\star(\infty, \hat{\vc}_{-j}))
\end{aligned}
\end{equation}

As Myerson payment will incentivizes truthful reporting, we have 
    \begin{equation}
    \begin{aligned}
        P_{\rightarrow j}^{MS}  = & c_j f(\mW^\star_{:,j}(c_j, \vc_{-j})) + \int_{c_j}^\infty f(\mW^\star_{:,j}(u, \vc_{-j})) d u \\
        =  & c_j f(\mW^\star_{:,j}(c_j, \vc_{-j})) +  \sum_{i \in \gB} v_i(\mW^\star_{i,:}(c_j, \vc_{-j})) - \sum_{j \in \gS} c_j f(\mW^\star_{:,j}(c_j, \vc_{-j})) \\
        & -\left[ \sum_{i \in \gB} v_i(\mW^\star_{i,:}(\infty, \vc_{-j})) - \sum_{j \in \gS} c_j f(\mW^\star_{:,j}(\infty, \vc_{-j}))\right] = P_{\rightarrow j}^{VCG}
        \end{aligned}
    \end{equation}
\end{proof}
\subsubsection{Proof of Theorem~\ref{thm:distribution rule}}

\label{app: proof-payment-redistribution}

\begin{proof}
    As $P_{\rightarrow j }^{MC}$ is the minimal payment rule that fulfills IC and IR for sellers, we must have $P_{\rightarrow j }^{MC} < P_{\rightarrow j }^{vcg}$. Claim~\ref{claim: upper bound of vcg} indicates 
    \begin{equation}
        P_{\rightarrow j }^{MS} \leq P_{\rightarrow j }^{vcg} \leq  \sum_{i \in \gB} v_i(\mW^\star (c_j, \vc_{-j})) - v_i(\mW^{\star -j} (c_j, \vc_{-j}))
    \end{equation}
    Following the choice of $\eta_{ij}$, we have 
    \begin{equation}
        P_{i\rightarrow j } \leq   v_i(\mW^\star (c_j, \vc_{-j})) - v_i(\mW^{\star -j} (c_j, \vc_{-j}))
    \end{equation}
    We can thus bound the payment from buyer $i$ by
    \begin{equation}
    \begin{aligned}
        P_{i\rightarrow }  =  \sum_{j} \eta_{ij}P_{\rightarrow j } 
         \leq  \sum_{j} v_i(\mW^\star(\vc)) - v_i(\mW^{\star -j}(\vc)) 
         \overset{\text{subadditive}} {\leq}  v_i(\mW^\star(\vc)) - v_i(\vzero)
    \end{aligned}
    \end{equation}
The final inequality is a consequence of the sub-additivity assumption, under which the aggregate contribution of data sellers is bounded above by the sum of their separate contributions.
\end{proof}

\section{Additional Experimental Results}

\subsection{Mean Estimation Markets}
\label{ app: sec: exp-mean-est-market}

\subsubsection{Experimental setup}
\label{app: mean-est-market-experimental-setup}

Choose the true data dimension as 3, that is $\vmu \in \mathbb{R}^3$. We first randomly sample true means $\{\mu_i\}_{i \in |\gB|}$ and $\{\mu_j\}_{j \in |\gS|}$, where $|\gB|=5$ and $|\gS|=10$. The true means are sampled from a normal distribution with mean 1 and variance 1. By assuming that sellers have unbiased local mean estimates, we only need to sample local variations $\{\sigma_j^2\}_{j=1}^{|\gB|}$, which is sampled from $\text{Unif}[0,1)$. The true unit costs $\vc$ are sampled uniformly from 1 to 10.

\subsubsection{Comparison of Different Truthful Payment Rules}
We focus on the seller with index $j=0$ and analyze its payment and cost while keeping the reported costs of other sellers fixed. This allows us to easily compute and compare different payment rules as we vary the true \( c_j \). In the mean estimation market, $\mW^\star(\hat{c}_j, \hat{\vc}_{-j})$ can have negative entries. Thus, $\sum_{k \neq j} f_k(\mW^\star(\hat{c}_j, \hat{\vc}_{-j}))$ can decrease or increase with $c_j$, while $f_j(\mW^\star(\hat{c}_j, \hat{\vc}_{-j}))$ always decrease with $c_j$. This is shown in Figure~\ref{fig:allocation_vs_cj}, where the results of two randomly generated mean estimation markets are presented. Since $\mW$ lies in an unconstrained domain, we have VCG payment equivalent to Myerson payment, as shown in Figure~\ref{subfig: smallest-payment-rule-mean}, where the value mismatch is due to numerical issues.

\begin{figure}[h!]
     \begin{subfigure}{0.4\textwidth}
    \includegraphics[width=\linewidth]{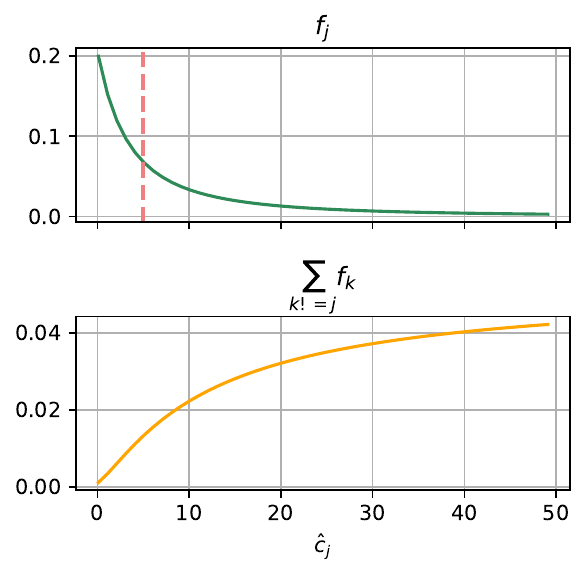}
    \caption{$f_j$ and $\sum_{k\neq j} f_k$ vs. seller $j$'s reported cost $\hat{c}_j$ when $\mW^\star$ has entries of same signs}
    \end{subfigure}
    \hfill
    \begin{subfigure}{0.41\textwidth}
\includegraphics[width=\linewidth]{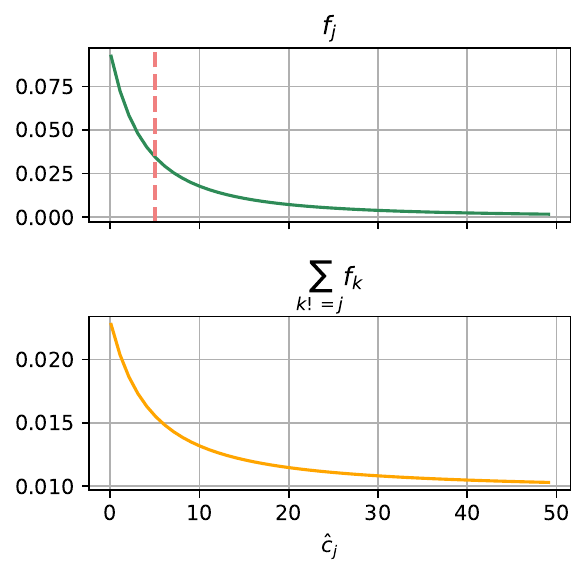}
    \caption{$f_j$ and $\sum_{k\neq j} f_k$ vs. seller $j$'s reported cost $\hat{c}_j$ when $\mW^\star$ has entries of different signs}
    \label{subfig:allocation_vs_cj-decreasing-fk}
    \end{subfigure}
    \caption{How data sharing cost changes with seller $j$'s reported cost factor $\hat{c}_j$ for Mean Estimation Market.}
    \label{fig:allocation_vs_cj}
\end{figure}

\begin{figure}[h!]
     \begin{subfigure}{0.49\textwidth}
    \includegraphics[width=\linewidth]{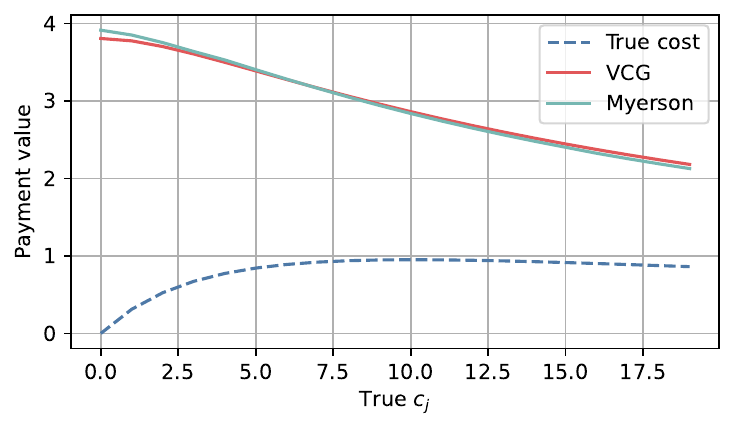}
    \caption{Mean Estimation Market}
    \label{subfig: smallest-payment-rule-mean}
    \end{subfigure}
    \hfill
    \begin{subfigure}{0.49\textwidth}
    \includegraphics[width=\linewidth]{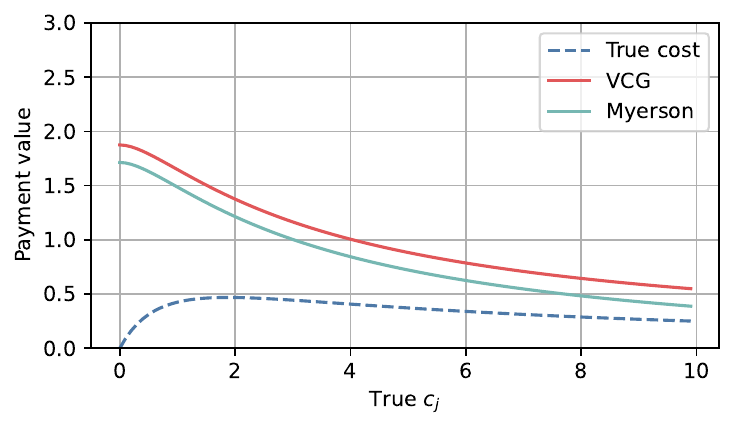}
    \caption{Data Mixture Market }
    \label{subfig: smallest-payment-rule-mixing}
    \end{subfigure}
\label{fig:comparison_truthful_payment_rules_mean}
    \caption{Comparison of different truthful payment rules with respect to varying true cost $c_j$s}
\end{figure}

\subsubsection{Mis-reporting for Untruthful payment rules}

Let true $c_j=5$. We compute the payment values and resulted utility when varying the reported $\hat{c}_j$. Given our choice of $\mW$, social cost is always minimized at $\hat{c}_j = c_j$. However, with LOO and Shapley payment rule, the resulted utilities peak at a different value than $c_j$, as shown in Figure~\ref{fig: LOO_shapley_misreport}.

\paragraph{LOO Payment Rule.} To maximize its utility, seller $j$ will mis-report $\hat{c}_j =9$. The resulting $PoA = SW(\mW^\star(6,\vc_{-j}))/SW(\mW^\star(5,\vc_{-j})) = 1.02$.

\paragraph{Shapley Payment Rule.} To maximize its utility, seller $j$ will mis-report $\hat{c}_j =8$. The resulting $PoA = SW(\mW^\star(8,\vc_{-j}))/SW(\mW^\star(5,\vc_{-j})) = 1.08$

\subsection{Data Mixture Market}
Our framework extends to LLM training as well. 
For simplicity, we consider a single buyer that is seeking for purchasing data from different data sellers to arrive at a good pretrained model. From \citet{ye2024datamixinglawsoptimizing}, the relationship between proportions of data mixtures and the validation loss in the target domain can be modeled via (\ref{eq: data-mixing-law}), assuming fixed token budgets. $b$ and $k$ are constant scalars, and $\vt = [t_i] \in \mR^{|\gS|}$ captures the how the training data of seller $j$ helps reduce validation loss on the target domain. $\vw \in \mR^{|\gS|}$ denotes the proportion of pertaining data from each seller $j$. 

\vspace{-1em}
\begin{equation}
\label{eq: data-mixing-law}
    L = b + k \exp \left(\sum_{j \in \gS} t_{j} w_{j} \right)
\end{equation}

Following our framework, the data mixtures can be determined from minimizing the social cost objective in (\ref{eq: social cost data mixing}).

\vspace{-1em}
\begin{equation}
\label{eq: social cost data mixing}
    \vw \in  \argmin_{\vw} L(\vw) + \sum_{j \in \gS} \hat{c}_j w_j^2
\end{equation}

In practice, (\ref{eq: data-mixing-law}) can be estimated using a small LM on small datasets. To pre-determine the data mixtures via a smaller LM is a standard practice in LLM pre-training. Here we randomly sample some values to be $(k,b,\vt)$. 
Given that $\mW$ here are row-stochastic, the rank of truthful payment methods again confirms our theoretical analysis, as shown in Figure~\ref{subfig: smallest-payment-rule-mixing}.

\subsection{RAG Market}
\label{app: RAG-market}

\subsubsection{Experimental Details}
\label{app: RAG-experiment-details}

\textbf{Dataset:} MedQuAD dataset~\cite{BenAbacha-BMC-2019}, which is a medical dataset containing 47,457 QA pairs.

\textbf{Procedures:}
\begin{itemize}
    \item Create a vector store using answers from MedQuAD dataset using Faiss library~\citep{douze2024faiss}. The embedding model used is \texttt{text-embedding-3-large} from OpenAIEmbeddings.
    \item Generate user queries from the questions from MedQuAD, the goal is to make the similarity search less straightforward, so that it requires some semantic understanding to retrieve the correct document.
    \item Conduct Top-$n$ similarity-based retrieval and cost-aware reranking. Get the final $k$ documents.
    \item Feed the retrieved $k$ documents into the LLM generator, which is \texttt{Qwen2.5-3B}, \citet{qwen2.5}, and generate the context-enhanced answer.
    \item The LLM judge (\texttt{DeepSeek-R1}, \citet{deepseekai2025deepseekr1incentivizingreasoningcapability}) evaluates the answer given the query and return a score between 0 and 10.
    \item Calculate the different payments for the documents in $\gC_n$.
\end{itemize}

In practice, due to the combinatorial complexity of the decision space—specifically, the 
$\binom{n}{k}$ possible combinations from $\gC_n$
 —the reranking process is simplified not to consider interactions among the selected $k$ documents. Each of the $n$ documents is scored independently, and the top $k$ documents favored by the LLM judge are selected. Our experiment follows the same setup.

\subsubsection{Prompt for the RAG Market}

\begin{lstlisting}[basicstyle=\ttfamily, breaklines=true,language=Awk, caption=LLM as a Judge Prompt]
You will be given a user_question and system_answer couple.
Your task is to provide a 'total rating' scoring how well the system_answer answers the user concerns expressed in the user_question.
Give your answer as a float on a scale of 0 to 10, where 0 means that the system_answer is not helpful at all, and 10 means that the answer completely and helpfully addresses the question.

Provide your feedback as follows:

Feedback:::
Total rating: (your rating, as a float between 0 and 10)

Now here are the question and answer.

Question: {question}
Answer: {answer}

Feedback:::
Total rating: 
\end{lstlisting}

\begin{lstlisting}[basicstyle=\ttfamily, breaklines=true,language=Python, caption=Question Generation Prompt]

You emulate a user of our medical question answering application.
Formulate 4 questions this user might ask based on a provided disease.
Make the questions specific to this disease.
The record should contain the answer to the questions, and the questions should
be complete and not too short. Use as few words as possible from the record. 

The record:

question: {question}
answer: {answer}
source: {source}
focus_area: {focus_area}

Provide the output in parsable JSON without using code blocks:

{{"questions": ["question1", "question2", ..., "question4"]}}
\end{lstlisting}

\subsubsection{Are LLM judges trustworthy?}
\label{app: LLM-judge-credibility}

We randomly sampled 200 queries from the MedQuAD dataset, each associated with a known ground truth answer. To prevent trivial matches based on string similarity, we rewrote the queries so that the retrieval step could not rely solely on lexical overlap. The retrieval corpus consisted of all answer entries from MedQuAD. For this experiment, we used top‑1 retrieval only, and the retrieved document was then used for answer generation. Among the 200 retrievals, 158 correctly matched the ground truth documents.

Note that, besides the ground-truth document, other documents in the corpus may still provide partial answers to the query. However, we expect the LLM Judge to assign higher scores when the correct (ground-truth) context is retrieved—and this is exactly what we observe: the average score was 8.23 when responses used ground-truth context, compared to 7.04 when they did not. This suggests that LLM Judge scores meaningfully reflect the quality of the retrieved documents.

\subsection{Payment calculation pseudocode in RAG market}
\label{app: pseudocode-RAG-payment}
$P_{\rightarrow j}^{\text{Shapley}}$ is not included in the pseudo code, due to its complex format. To get $P_{\rightarrow j}^{\text{Shapley}}$, one can first calculate $P_{\rightarrow j}^{\text{LOO}}$ for all subsets of $[n]\backslash \{j\}$, and calculate the average. 

\begin{algorithm}
\caption{Payment calculation in a RAG Market with $k=1$}
\begin{algorithmic}[1]
\REQUIRE LLM judge scores $\mathbf{s} = [s_i]_{i=1}^n$, Costs $\mathbf{p} = [p_i]_{i=1}^n$ with $n \ge 2$

\STATE \COMMENT{Compute Social welfare $\phi$}
\FOR{$i \gets 1$ to $n$}
  \STATE $\phi_i= s_i-p_i$
\ENDFOR

\STATE \COMMENT{Pick winner and record its price}
\STATE $j \gets \arg\max_i \ \phi_i$ 
\STATE $c_j \gets p_j$

\STATE \COMMENT{Sort $\mathbf{\phi}$ in descending order}
\STATE $(\text{Sorted SW}\: \widetilde{\mathbf{\phi}}, \text{SortedIdx}\: I) \gets \text{sorted}(\mathbf{\phi}, \text{descending})$

\STATE \COMMENT{Myerson payment}
\STATE $P^{\text{Myerson}}_{\rightarrow j} \gets c_j + (\widetilde{\mathbf{\phi}}_1 - \widetilde{\mathbf{\phi}}_2 - c_j) \;=\; \widetilde{\mathbf{\phi}}_1 - \widetilde{\mathbf{\phi}}_2 $
\STATE $P^{\text{VCG}}_{\rightarrow j} \gets  \widetilde{\mathbf{\phi}}_1 - \widetilde{\mathbf{\phi}}_2 $
\STATE $P^{\text{LOO}}_{\rightarrow j} \gets s[I[1]] - s[I[2]]$

\end{algorithmic}
\end{algorithm}

\subsection{Illustrative Examples for the RAG market}
\label{app: illustrative-examples}

\begin{example}[Truthful LOO Payment with Low Unit Costs]
 For document 1, it will report truthfully, as $P_1^{LOO} = v(\text{doc 1} \cup \text{doc2}) -v( \text{doc2} ) = 0.2$ is irrelevant to document 1 owner's reported cost $\hat{c}_1$ and greater than document 1's unit cost. The owner of Document 2 also has no incentive to misreport, since even reporting the minimum possible cost $\hat{c}_2 = 0$ does not lead to retrieval; its utility remains lower than that of Document 1 (0.8).
 
\begin{center}
\begin{tabular}{ c c c c } 
 \hline
  & LLM judge score $v$ & unit cost $c$ & SW $\phi$ \\ 
  \hline
 doc1 & 0.9 & 0.1 & 0.8 \\ 
 doc2 & 0.7 & 0.05 & 0.65 \\ 
 \hline
\end{tabular}
\end{center}
\end{example}

\begin{example}[Untruthful LOO Payment with High Unit Costs]
Now imagine the unit data costs are higher. In this scenario, document 1 should still get retrieved if both report truthfully. Yet, $P_1^{LOO} = v(\text{doc 1} \cup \text{doc2} ) -v( \text{doc2} ) = 0.2 < c_1$. That is, the payment is not IR for document 1 owner. Document 1 owner will over-report the cost, so document 1 will never get retrieved. Since the payment is negative, the owner of document 2 also has no incentive to participate and will prefer to over-report their cost. As a result, neither document will be retrieved due to over-reporting.
\begin{center}
\begin{tabular}{ c c c c } 
 \hline
  & LLM judge score $v$ & unit cost $c$ & SW $\phi$ \\ 
  \hline
 doc1 & 0.9 & 0.3 & 0.6 \\ 
 doc2 & 0.7 & 0.2 & 0.5 \\ 
 \hline
\end{tabular}
\end{center}
\end{example}

\begin{example}[Top2 Retrieval]

If all document owners report truthfully, both doc1 and doc2 will be retrieved. $P_1^{LOO} = v(\text{doc 1} \cup \text{doc2}) - v(\text{doc2}\cup \text{doc3}) = 0.2 > c_1$ and $P_2^{LOO} = v(\text{doc 1} \cup \text{doc2}) - v(\text{doc1}\cup \text{doc3}) = 0.1 >c_2$. In this scenario, no document owner will lie about their true costs. Shapley payment is the same as the LOO payment in this case.

For VCG payment, following the (\ref{eq: VCG payment rule}), we have
\begin{align}
    P_1^{\text{VCG}} & = u(\text{doc 1} \cup \text{doc2}) - u(\text{doc 2} \cup \text{doc3}) + c_1 = 0.3 \\
    P_2^{\text{VCG}} & = u(\text{doc 1} \cup \text{doc2}) - u(\text{doc 1} \cup \text{doc3}) + c_2 = 0.2
\end{align}

For Myerson payment, we first need to decide the exact $\hat{c}'_j$ for each document owner above which it will no longer get retrieved. For doc1, it is $0.3$, and for doc1 it is $0.2$. So we have
\begin{equation}
    P_1^{\text{LOO}} = c_1 + (\hat{c}'_1 - c_1) =0.3 ,\quad  P_2^{\text{LOO}} = c_2 + (\hat{c}'_2 - c_2) =0.2
\end{equation}

We see that VCG and LOO payments are still equivalent.

\begin{minipage}[t]{0.3\linewidth}
\captionof{table}{Costs for each doc}
\begin{tabular}{c c}
 \hline
     & unit cost $c$  \\
      \hline
     doc 1 & 0.1 \\
     doc 2 & 0.05 \\
     doc 3 & 0.1 \\
      \hline
\end{tabular}
\end{minipage}
\hfill
\begin{minipage}[t]{0.6\linewidth}
\captionof{table}{Utility for each 2-doc retrieval}
\begin{tabular}{ c c c c } 
 \hline
  & LLM judge score $v$ & true costs & SW $\phi$ \\ 
  \hline
 doc1, doc2 & 0.9 & 0.15  & 0.75 \\ 
 doc2, doc3 & 0.7 & 0.15 & 0.55 \\ 
 doc1, doc3 & 0.8 & 0.2 & 0.6 \\
 \hline
\end{tabular}
\end{minipage}%

\end{example}

\section{Challenges with Private Buyer Performance}
\label{app: private-buyer-valuation}

So far, we have looked into the scenario where buyers' performance function is assumed to be known. In practice, such functions are usually private, especially when entering into a new market with no historical records. Or when sharing the performance functions violates the buyer's privacy. In such cases, we show that it is impossible to design any payment rule that simultaneously achieves IC, IR, WBB, and SE. In fact, the social cost may be arbitrarily far from the efficient solution. This result is similar in spirit to the celebrated Myerson–Satterthwaite theorem~\citep{myerson1983efficient} proving the impossibility of efficient bilateral trade (double auctions).
The negative results open up a new challenge for designing valuation and pricing rules for data markets.

\subsection{An Impossibility Result}
\begin{theorem}[cr.~\cite{holmstrom1979groves}]
    Assume all participants' valuation $v_i \in \gV_i$. If $\gV = \times \gV_i$ is a convex domain, then Groves mechanism that is defined by the allocation rule
\[
\mW^\star \in \argmax_{\mW \in \gW} \sum_{i} \hat{v}_i(\mW)
\]
and payment rule
\[
p_i = h_i(- \hat{v}_i) - \sum_{j \neq i} \hat{v}_j(\mW)
\]
is the unique IC mechanism, up to the choice of $h_i$
\end{theorem}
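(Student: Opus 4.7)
My plan is to follow Holmström's classical argument: use IC combined with the efficient-allocation assumption to extract a tight structural constraint on $p_i$, then invoke convexity of $\gV_i$ to turn a local constancy statement into a global one. Fix agent $i$ and opposing reports $\hat{v}_{-i}$, and define the reduced payment
\[
R_i(v_i) \;:=\; p_i(v_i, \hat{v}_{-i}) \;+\; \sum_{j \neq i} \hat{v}_j\bigl(\mW^\star(v_i, \hat{v}_{-i})\bigr).
\]
It would suffice to show that $R_i$ is independent of $v_i$, since setting $h_i(\hat{v}_{-i})$ equal to that common value recovers the claimed Groves formula, and ``uniqueness up to $h_i$'' is then automatic because $h_i$ is the only remaining degree of freedom.

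The key step is to derive a two-sided sandwich on $R_i(v_i') - R_i(v_i)$. Writing IC in both directions — true type $v_i$ prefers truth-telling to reporting $v_i'$, and symmetrically — and substituting the definition of $R_i$, I obtain
\[
S(v_i, v_i') - S(v_i, v_i) \;\leq\; R_i(v_i') - R_i(v_i) \;\leq\; S(v_i', v_i') - S(v_i', v_i),
\]
where $S(u, u') := u(\mW^\star(u')) + \sum_{j \neq i} \hat{v}_j(\mW^\star(u'))$. Efficiency of $\mW^\star$ implies both bounds have the right sign (lower $\leq 0 \leq$ upper), and the gap between them equals the familiar ``monotonicity surplus'' $(v_i - v_i')(\mW^\star(v_i) - \mW^\star(v_i'))$. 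I would then shrink this gap by joining any two $v_i, v_i' \in \gV_i$ by the segment $v(t) := (1-t)v_i + t v_i'$, which lies entirely in $\gV_i$ by convexity, applying the sandwich infinitesimally along it, and integrating from $0$ to $1$; the surplus becomes second-order in $dt$ while $R_i(v(t+dt)) - R_i(v(t))$ is first-order, so the bounds collapse into an equality and force $R_i(v_i') = R_i(v_i)$. The converse direction — that every Groves-form payment is IC under the welfare-maximizing allocation — is immediate, since truthful reporting makes the agent's utility equal to the reported social welfare minus a term independent of $v_i$.

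I expect the main obstacle to be the regularity needed to pass from the pointwise sandwich to the integrated equality. The bound alone does not pin down $R_i(v_i') - R_i(v_i)$ when the two types are far apart, because the monotonicity surplus is generically strictly positive. Convexity of $\gV_i$ — rather than mere connectedness — is indispensable here, because the path must stay inside the domain at every intermediate step and some continuity of $\mW^\star$ along convex paths is needed to make the infinitesimal bound integrable. A careful write-up would either invoke a suitable continuity or selection property of $\mW^\star$, or bypass differentiation altogether via the taxation principle, reducing the problem to a scalar antiderivative along each segment.
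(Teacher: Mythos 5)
The paper does not actually prove this statement: it is imported verbatim from \citet{holmstrom1979groves} (note the ``cr.''), and is used only as an ingredient in the proof of the impossibility result in Appendix~\ref{app: private-buyer-valuation}. So there is no in-paper argument to compare against; what you have written is a reconstruction of Holmstr\"om's original proof, and it is essentially the right one. Your reduction is correct: defining $R_i(v_i)=p_i(v_i,\hat v_{-i})+\sum_{j\neq i}\hat v_j(\mW^\star(v_i,\hat v_{-i}))$ and showing it is independent of $v_i$ is exactly the statement ``uniqueness up to $h_i$,'' and your two-sided IC sandwich
\[
S(v_i,v_i')-S(v_i,v_i)\;\leq\;R_i(v_i')-R_i(v_i)\;\leq\;S(v_i',v_i')-S(v_i',v_i)
\]
is the standard starting point, with efficiency of $\mW^\star$ giving the correct signs on both ends. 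The one gap you flag --- the regularity needed to integrate the infinitesimal sandwich along the segment $v(t)=(1-t)v_i+tv_i'$ --- is real but closable without assuming any continuity of $\mW^\star$: along the segment, both the maximized welfare $V(t)=\max_{\mW}\bigl[v(t)(\mW)+\sum_{j\neq i}\hat v_j(\mW)\bigr]$ and the equilibrium utility $U(t)=\max_{t'}\bigl[S(v(t),v(t'))-R_i(v(t'))\bigr]$ are pointwise suprema of functions affine in $t$, hence convex, absolutely continuous, and differentiable a.e.\ with the same envelope derivative $(v_i'-v_i)(\mW^\star(v(t)))$. Since $U(t)-V(t)=-R_i(v(t))$ by IC, their difference has zero derivative a.e.\ and is absolutely continuous, hence constant. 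This is the ``scalar antiderivative'' route you allude to at the end, and it is where convexity of $\gV=\times\gV_i$ (or, in Holmstr\"om's more general form, smooth connectedness) is genuinely used. With that step made explicit, and the easy converse that every Groves payment is IC under the welfare-maximizing allocation, your sketch is a complete and faithful proof of the cited theorem.
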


\begin{theorem}
[Impossibility Result]
\label{thm - impossibility result}
   When buyers and sellers both have their private types to report, no single mechanism can simultaneously satisfy IR, WBB, IC and SE. Further, any mechanism that ensures WBB, IR, and IC can result in arbitrarily poor social efficiency. 
\end{theorem}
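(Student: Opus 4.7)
The plan is to leverage the Groves–Holmström characterization stated immediately above and then mimic the Myerson–Satterthwaite bilateral-trade argument in our data-market setting. First I would argue that combining IC with SE pins the mechanism down to a Groves-style rule: SE forces the allocation to be the welfare-maximizing $\mW^\star$, and under convexity of the valuation domain the cited theorem says that every IC mechanism supporting this allocation must take the form
\[
p_i \;=\; h_i(\hat v_{-i}) \;-\; \sum_{j \neq i} \hat v_j(\mW^\star(\hat v)),
\]
where we read the sellers' valuations as $-\hat c_j f_j$. The only design freedom lies in the functions $h_i$.

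Next I would add the remaining two constraints and expose the conflict. Summing payments over all $N = |\gB|+|\gS|$ participants gives
\[
\sum_i p_i \;=\; \sum_i h_i(\hat v_{-i}) \;-\; (N-1)\cdot SW(\mW^\star(\hat v)),
\]
so WBB forces this sum to have a prescribed sign. IR, on the other hand, yields a profile-wise lower bound on each $h_i$ via the nonnegativity of each utility $v_i(\mW^\star) + p_i$. The hard step will be to exhibit two valuation profiles at which the summed IR lower bounds violate the WBB inequality: following the classical two-type swap, I would take one profile whose efficient allocation calls for substantial trade (forcing large compensating transfers to sellers) and a second profile whose efficient allocation calls for no trade (pinning the free constants from the other side). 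Since a single choice of $\{h_i\}$ cannot satisfy both profiles simultaneously, the impossibility follows.

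For the second conclusion, I would build an explicit one-buyer/one-seller family indexed by a scalar parameter $\theta$, with valuations $v(\theta)$ and costs $c(\theta)$ lying in a convex domain and with efficient surplus scaling linearly in $\theta$. Any mechanism satisfying IC, IR, and WBB on this family must, by the first-part argument, sometimes refuse trade when efficiency dictates trading; driving $\theta$ up then makes the foregone surplus, and hence the ratio of achieved to efficient social welfare, unboundedly poor, establishing the ``arbitrarily poor efficiency'' claim.

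The main obstacle I anticipate is technical rather than conceptual: verifying that the Holmström characterization applies cleanly when the players naturally split into buyers reporting $v_i$ and sellers reporting $c_j$ on the continuous allocation space $\mW$, and that the convex-domain hypothesis is genuinely satisfied by the valuation classes considered in the paper (in particular by $-c_j f_j(\cdot)$ as a function of the reported $c_j$). Once the characterization is secured, both the impossibility and the unboundedness of the welfare loss follow from the same two-profile construction.
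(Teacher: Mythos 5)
Your proposal is correct and takes essentially the same route as the paper: both invoke the Groves--Holmstr\"om characterization to reduce IC$+$SE to a Groves payment with free terms $h_i$, then show that the IR lower bounds and the WBB upper bound on these terms are jointly infeasible, forcing zero transfers and hence no trade. Your ``two-profile swap'' is exactly the paper's $v_b \to 0$ limiting argument in the one-buyer--one-seller case, and your parameterized family for the second claim matches the paper's observation that the price of anarchy diverges as the efficient social cost tends to zero.
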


\begin{proof}
We consider a simple one-buyer-one-seller case, where the reported valuations are 
\begin{equation}
    \hat{v}_b = \hat{l}_b(0) - \hat{l}_b(w_{bs}^\star) ,\:  \hat{v}_s = - \hat{c}_s f(w_{bs}^\star)
\end{equation}
where $w_{bs}^\star \in \argmin  \hat{l}_b(0) - \hat{l}_b(w_{bs}^\star) - \hat{c}(w_{bs}^\star)$.
Truthfulness requires Groves payment rule, thus $p_b = h(v_s)-v_s$ and $ p_s=h(v_b)-v_b$. With truthful reporting, we have social efficiency directly follows as the coordinator chooses socially optimal $\mW^\star$. 

Now we check IR and WBB conditions, with IR, we have
\begin{equation}
    v_b+v_s \geq \max(h(v_b),h(v_s))
\end{equation}
With WBB, we have 
\begin{equation}
\label{cond: wbb}
    v_b+v_s \leq h(v_b)+h(v_s)
\end{equation}

Let  $v_b \rightarrow 0$, we have $v_s \geq \max(h(0),h(v_s))$. This implies $v_s \geq h(v_s)$. Analogously, $v_b \geq h(v_b)$.

Let $\delta_b = v_b -h(v_b)$ and $\delta_s = v_s -h(v_s)$, we have $\delta_b \geq 0$ and $\delta_s \geq 0$. Plugging in (\ref{cond: wbb}), we have 
\[
\delta_b +\delta_s + h(v_b)+h(v_s) \leq h(v_b)+h(v_s)
\]
which suggests $\delta_b +\delta_s \leq 0$. It can only be $\delta_b = \delta_s =0$.
That is, $p_b = p_s =0$. Both the buyer and the seller only take their own valuations into account. As the seller's valuation will be negative as long as $w_{bs}^\star>0$, constrained to IR conditions, there will not be trade happening. 

Now let's check the Price of Anarchy (PoA):
\begin{equation}
    PoA = \frac{l_b(0)}{\min_{w_{bs}} l_b(w_{bs}) + c_s f(w_{bs})}
\end{equation}

PoA can be arbitrarily large if $\min_{w_{bs}} l_b(w_{bs}) + c_s f(w_{bs}) \rightarrow 0$, when happens when seller $s$ has exactly the data buyer $b$ needs, which makes buyer $b$'s loss goes to 0, and the cost $c_s$ is 0. 

\end{proof}

\end{document}